\newcommand{\mikel}[1]{\marginpar{*}{\bf Mikel's remark}: {\em #1}}
\newcommand{\sidj}[1]{\marginpar{+}{\bf Sid's remark}: {\em #1}}
\newcommand{\bikd}[1]{\marginpar{=}{\bf Bikash's remark}: {\em #1}}
\newcommand{\suppress}[1]{}
\newtheorem{theorem}{Theorem}
\newtheorem{lemma}{Lemma}[section]
\newtheorem{claim}{Claim}[section]
\newcommand{\bydef}{\stackrel{\triangle}{=}}
\newcommand{\poly}{\mbox{poly}}
\newcommand{\charac}{{p}}
\newcommand{\mess}{u}
\newcommand{\ind}{r}
\newcommand{\F}{\mathbb{F}}
\newcommand{\Ca}{{C}}
\newcommand{\ad}{{\tt{add}}}
\newcommand{\om}{{\tt{omni}}}
\def\conf{\mbox{$\cal{S}$}}
\newcommand{\defn}{\stackrel{\triangle}{=}}
\def\cA{\mbox{$\cal{A}$}}
\def\cX{\mbox{$\cal{X}$}}
\def\cC{\mbox{$\cal{C}$}}
\def\cI{{\mbox{${\cal{I}}$}}}
\def\cV{\mbox{$\cal{V}$}}
\def\cE{\mbox{$\cal{E}$}}
\def\cU{\mbox{$\cal{U}$}}
\def\e{\varepsilon}
\def\bx{{\bf x}}
\def\by{{\bf y}}
\def\bU{{\bf U}}
\def\bX{{\bf X}}
\def\bJ{{\bf J}}
\def\bl{{n}}
\def\rate{{R}}
\newcommand{\Graph}{{{\cal G}}}
\def\01{\{0,1\}}
\newcommand{\remove}[1]{}
\begin{document}

\title{Binary Causal-Adversary Channels}
\author{
\authorblockN{M. Langberg}
\authorblockA{Computer Science Division  \\
Open University of Israel \\
Raanana 43107, Israel \\
{\tt mikel@openu.ac.il}\vspace*{-4.0ex}} \and
\authorblockN{S. Jaggi}
\authorblockA{Department of Information Engineering\\
Chinese University of Hong Kong \\
Shatin, N.T., Hong Kong \\
{\tt jaggi@ie.cuhk.edu.hk}\vspace*{-4.0ex}} \and
\authorblockN{B. K. Dey}
\authorblockA{Department of Electrical Engineering  \\
Indian Institute of Technology Bombay \\
Mumbai, India, 400 076 \\
{\tt bikash@ee.iitb.ac.in}\vspace*{-4.0ex}}
}

\date{}
\maketitle
%
\footnotetext{The work of B. K. Dey was supported by Bharti Centre for Communication in IIT Bombay, that of M. Langberg was supported in part by ISF grant 480/08, and that of S. Jaggi was partially supported by MS-CU-JL grants.}

\begin{abstract}
In this work we consider the communication of information in the presence of a
{\em causal} adversarial jammer.
In the setting under study, a sender wishes to communicate a message to a receiver by transmitting a codeword $\bx=(x_1,\dots,x_\bl)$ bit-by-bit over a communication channel. The adversarial jammer can view the transmitted bits $x_i$ one at a time, and can change up to a $p$-fraction of them.
However, the decisions of the jammer must be made in an {\em online} or {\em causal} manner.
Namely, for each bit $x_i$ the jammer's decision on whether to corrupt it or not (and on how to change it) must depend only on $x_j$ for $j \leq  i$. This is in contrast to the ``classical'' adversarial jammer which may base its decisions on its complete knowledge of $\bx$.
We present a non-trivial upper bound on the amount of information that can be communicated.
We show that the achievable rate can be asymptotically
no greater than $\min\{1-H(p),(1-4p)^+\}$. Here $H(.)$ is the binary entropy function, and $(1-4p)^+$ equals $1-4p$ for $p\leq 0.25$, and $0$ otherwise.




\end{abstract}
%


\section{Introduction}

Consider the following adversarial communication scenario. A sender Alice wishes to transmit a message $\mess$ to a receiver Bob.
To do so, Alice encodes $\mess$ into a codeword $\bx$ and transmits it over a {\em binary channel}. The codeword $\bx=x_1,\dots,x_\bl$ is a binary vector of length $\bl$.
However, Calvin, a malicious adversary, can observe $\bx$ and corrupt up to a $p$-fraction of the $\bl$ transmitted bits, {\em i.e.}, $pn$ bits.

In the classical adversarial channel model, e.g., \cite{CT06}, it is usually assumed that Calvin has full knowledge of the entire codeword $\bx$, and based on this knowledge (together with the knowledge of the code shared by Alice and Bob) Calvin can maliciously plan what error to impose on $\bx$.
We refer to such an adversary as an {\em omniscient} adversary.
For binary channels, the optimal rate of communication in the presence of an omniscient adversary has been an open problem in classical coding theory for several decades. 
The best known lower bound is given by the Gilbert-Varshamov bound~\cite{gilbert,varshamov}, which implies
that Alice can transmit at rate $1-H(2p)$ to Bob.
Conversely, the tightest upper bound was given by McEliece et al.~\cite{mceliece2}, and has a positive gap from the lower bound for all $p \in (0,1/4)$ (see
Fig. \ref{fig:bounds}).

\begin{figure}[h]
\centering
\includegraphics[width=3.5in]{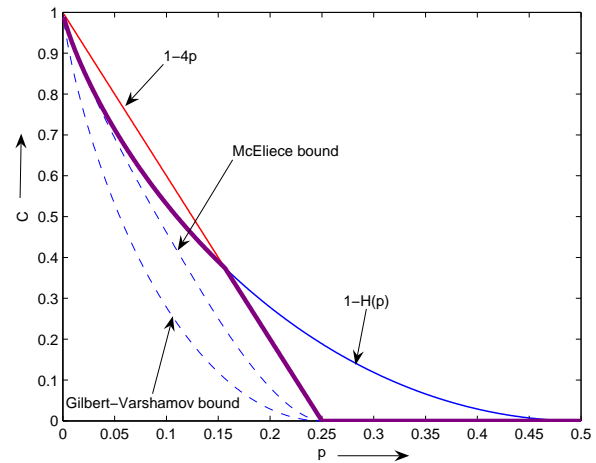}
\caption{Bounds on capacity of the adversarial channel. The bold line in purple is our upper bound of $\min\{1-H(p),(1-4p)^+\}$.}
\label{fig:bounds}
\end{figure}

In this work we initiate the analysis of coding schemes that allow communication against certain adversaries that are weaker than the omniscient adversary. We consider adversaries that behave in a {\em causal} or  {\em online} manner. Namely, for each bit $x_i$, we assume that Calvin decides whether to change it or not (and if so, how to change it) based on the bits $x_j$, for $j \leq i$ alone, {\em i.e.}, the bits that he has already observed. In this case we refer to Calvin as a {\em causal} adversary.

Causal adversaries arise naturally in practical settings, where adversaries typically have no {\em a priori} knowledge of Alice's message $\mess$. In such cases they must simultaneously learn $\mess$ based on Alice's transmissions, and jam the corresponding codeword $\bx$ accordingly. This {\em causality} assumption is reasonable for many communication channels, both wired and wireless, where Calvin is not co-located with Alice.
For example consider the scenario in which the transmission of $\bx=x_1,\dots,x_\bl$ is done during $\bl$ channel uses over time, where at time $i$ the bit $x_i$ is transmitted over the channel.
Calvin can only corrupt a bit when it is transmitted (and thus its error is based on its view so far).
To decode the transmitted message, Bob waits until all the bits have arrived.
As in the omniscient model, Calvin is restricted in the number of bits $p\bl$ he can corrupt.
This might be because of limited processing power or limited transmit energy.

Recently, the problem of codes against causal adversaries was considered and solved by the authors~\cite{djl} for {\em large-$q$ channels}, {\em i.e.}, channels where Alice's codeword $\bx=x_1,\dots,x_\bl$ is considered to be a vector
of length $\bl$ over a field of ``large'' size $q$. Each {\em symbol}
$x_i$ may represent a large packet of bits in practice. Calvin is
allowed to arbitrarily corrupt a $p$-fraction of the symbols, rather
than bits. A tight characterization of the rate-region for various
scenarios is given in~\cite{djl}, and computationally efficient
codes that achieve these rate-regions are presented. However, the
techniques used in characterizing the rate-region of causal
adversaries over large-$q$ channels do not work over binary channels.
This is because each symbol in a large-$q$ channel can contain
within it a ``small" hash that can be used to verify the symbol. This
is the crux of the technique used 
to achieve the lower bounds in~\cite{djl}. We currently do not know how to extend
this method to binary channels. Conversely, for upper bounds, the
geometry of the space of length-$\bl$ codewords over large-$q$
alphabets is significantly different than that corresponding to binary alphabets. 
For instance, for large-$q$ channels the volume of an $n$-sphere of
radius $\alpha n$ ($0\leq \alpha \leq 1$) over $F_q$ is $\sim q^{n\alpha }$,
This leads to simpler bounds for large-$q$ channels.

In this work we initiate the study of binary causal-adversary channels, and present two upper bounds on their capacity: $1-H(p)$, and $(1-4p)^+$.
The upper bound of $1-H(p)$ is very ``natural". Namely, it is not hard to verify that if Calvin attacks Alice's transmission by simulating the well-studied Binary Symmetric Channel~\cite{CT06}, he can force a communication rate of no more than $1-H(p)$.
The upper bound of $(1-4p)^+$ presented in this work is non-trivial for both
its implications and its proof techniques. The bound demonstrates
that at least for some values of $p$, the achievable rate is bounded
away from $1-H(p)$. For $p \in (p_0,0.5)$, $1-4p$ is strictly less
than $1-H(p)$ (here $p_0$ is the value of $p$ satisfying $H(p) = 4p$,
and can be computed to be approximately $0.15642\ldots$). In fact
for $p \in (0.25,0.5)$ our bound implies that no communication at
positive rate is possible, which is much stronger than the result
obtained by the upper bound of $1-H(p)$ (see Fig.~\ref{fig:bounds}).
Our proof techniques include a combination of tools from the fields of Extremal Combinatorics (e.g. Tur\'{a}n's theorem~\cite{turan}), and classical Coding Theory (e.g. the Plotkin bound~\cite{plotkin1, brouwer1}).

\section{Model}
\label{sec:defns}
For any integer $i$ let $[i]$ denote the set $\{1,\dots,i\}$.
Let $\rate \geq 0$ be Alice's {\em rate}.
An $(\bl,\rate\bl)$-{\em code} $\cC$ is defined by Alice's encoder and Bob's corresponding decoder, as below.

\noindent {\bf{Alice:}} Alice's message $\mess$ is assumed to be a random variable $\bU$ with entropy $\rate \bl$, over alphabet $\cU$. We consider two types of encoding schemes for Alice.

For {\em deterministic codes},
Alice's message $\bU$ is assumed to be uniformly distributed over $\cU = [2^{\rate \bl}]$. Her
{\em deterministic encoder} is a deterministic function $f_D(.)$ that maps every $\mess$ in
$[2^{\rate \bl}]$ to a vector $\bx(\mess) = (x_1,\dots, x_\bl)$
in $\{0,1\}^\bl$. Alice's {\em codebook} $\cX$ is the collection $\{\bx(\mess)\}$ of all possible transmitted codewords.

More generally, Alice and Bob may use {\em probabilistic codes}. For
such codes, the random variable $\bU$ 
corresponding to Alice's
message $p_U$ may have an arbitrary distribution $p_U$ (with entropy $
\rate \bl$) over an arbitrary alphabet $\cU$.  
Alice's {\em codebook}
$\cX$ is an arbitrary collection $\{\cX(\mess)\}$ of subsets of
$\{0,1\}^\bl$. For each subset $\cX(\mess) \subset \cX$, there is a
corresponding {\em codeword random variable} $\bX(\mess)$ with
{\em codeword distribution} $p_{X(\mess)}$ over $\cX(\mess)$.
For any value $\bU=\mess$ of the message, Alice's encoder choses a
codeword from $\cX(\mess )$ randomly from the distribution
$p_{X(\mess)}$. Alice's
message distribution $p_U$, codebook $\cX$, and all the codebook
distributions $p_{X(\mess)}$ are all known to both Bob and Calvin,
but the values of the random variables $\bU$ and $\bX(.)$ are
unknown to them. If $\cX(\mess) = \{\bx (\mess, r) : r \in \Lambda_\mess\}$,
then the transmitted
codeword $\bX(\bU)$ has the probability distribution given by
$\Pr [\bX(\bU) = \bx (\mess, \ind)] = p_U(u)p_{X(\mess)}(\bx(\mess,\ind))$.
Let $p$ be the overall distribution of codewords $\bx = \bx(\mess,r)$ of Alice.
It holds that $p(\bx(\mess,\ind))=p_U(u)p_{X(\mess)}(\bx)$ and $p(\bx)=\sum_{U}{p_U(u)p_{X(\mess)}(\bx)}$.

\noindent {\bf{Calvin/Channel:}}
Calvin possesses $\bl$ {\em jamming functions} $g_i(.)$ and $\bl$ arbitrary jamming random variables $\bJ_i$ that satisfy the following constraints.

\noindent {\em Causality constraint:} For each $i \in [\bl]$, the jamming function $g_i(.)$ maps $\bx^i=(x_1,\ldots,x_i)$ and $\bJ^i=(\bJ_1,\ldots,\bJ_i)$ to an element of  $\{0,1\}$.

\noindent {\em Power constraint:} The number of indices $i \in [n]$ for which the value of $g_i(.)$ equals $1$ is at most $p\bl$. That is, for all $\bx^\bl,\bJ^\bl$, $\sum_i g_i(\bx^i,\bJ^i) \leq p\bl$.

\noindent The {\em output} of the channel is the set of bits $y_i = x_i \oplus g_i(\bx^i,\bJ^i)$ for $i=1,\ldots, n$.




\noindent {\bf{Bob:}}
Bob's {\em decoder} is a (potentially) probabilistic function $h(.)$ of the received vector $\by$. It maps the vectors
$\by = (y_1,\dots y_\bl)$ in $\{0, 1\}^\bl$ to the messages in $\cU$.

\noindent {\bf{Code parameters:}}
Bob is said to make a {\em decoding error} if the message $\mess'$ he decodes differs from the message $\mess$ encoded by Alice.
The {\em probability of error} for a given message $\mess$ is defined as the probability, over Alice, Calvin and Bob's random variables, that Bob makes a decoding error.
The probability of error of the code $\cC$ is defined as the average over all
$\mess \in \cU$ of the probability of error for message $\mess$.

We define two types of rates and corresponding capacities.

The rate $\rate$ is said to be {\it weakly achievable} if for every $\e > 0$, $\delta>0$  and every sufficiently large $\bl$ there exists an $(\bl,(\rate-\delta)\bl)$-code that allows communication with probability of error at most $\e$.
The supremum over $\bl$ of the weakly achievable rates is called the {\it weak capacity} and is denoted by $\Ca^{\tt{w}}$.

The rate $\rate$ is said to be {\it strongly achievable}\footnote{This definition is motivated by the extensive literature on error exponents in information theory -- for large classes of information-theoretic problems, e.g. \cite{gallager1, csiszar1}, the probability of error of the coding scheme is required to decay exponentially in block length.} if for every $\delta>0$, $\exists \alpha >0$  so that
for sufficiently large $\bl$ there exists an $(\bl,(\rate-\delta)\bl)$-code that allows communication with probability of error at most $e^{-\alpha n}$.
The supremum over $\bl$ of the strongly achievable rates is called the {\it strong capacity} and is denoted by $\Ca^{\tt{s}}$.

\noindent {\bf Remark:} Since a rate that is strongly achievable is always weakly achievable but the converse is not true in general, $\Ca^{\tt{w}} \geq \Ca^{\tt{s}}$.

\section{Related work and our results}
To the best of our knowledge, communication in the presence of a causal adversary has not been explicitly addressed in the literature (other than our prior work for causal adversaries over large-$q$ channels). Nevertheless, we note that the model of causal channels, being a natural one, has been ``on the table'' for several decades and the analysis of the online/causal channel model appears as an open question in the book of Csisz\'{a}r and Korner \cite{csiszar1} (in the section addressing Arbitrary Varying Channels \cite{blackwell1}). Various variants of causal adversaries have been addressed in the past, for instance \cite{blackwell1,JagLHE:05,SM06,Sar08,NL08} -- however the models considered therein differ significantly from ours.

At a high level, we show that for causal adversaries, for a large range of $p$ (for all $p > 0.25$), the maximum achievable rate equals that of the classical ``omniscient" adversarial model ({\em i.e.}, $0$).
This may at first come as a surprise, as the online adversary is weaker than the omniscient one, and hence one may suspect that it allows a higher rate of communication.

We have two main results. Theorem~\ref{the:det} gives an upper bound on the weak capacity $\Ca^{\tt{w}}$
if Alice's encoder is deterministic.
Theorem~\ref{the:prob} gives an upper bound on the strong capacity
$\Ca^{\tt{s}}$ in the more general case where Alice's encoder is probabilistic.
Due to certain limitations of our proof techniques, we do not present any bounds on the weak capacity in the latter setting.
The upper bound in both cases equals $\min\{1-H(p),(1-4p)^+\}$.


\begin{theorem}[Deterministic encoder]
\label{the:det}
For deterministic codes,  $\Ca^{\tt{s}} \leq \Ca^{\tt{w}} \leq \min\{1-H(p),(1-4p)^+\}$.
\end{theorem}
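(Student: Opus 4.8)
The plan is to prove the two bounds separately, and then combine them. The $1-H(p)$ bound is essentially folklore: Calvin can ignore his causal observations entirely and simply flip each bit independently with probability $p$ (or, to respect the hard power constraint exactly, flip a uniformly random $pn$-subset of coordinates). This reduces the channel to the Binary Symmetric Channel $\mathrm{BSC}_p$ (up to negligible corrections from the cardinality constraint), whose capacity is $1-H(p)$; any code beating this rate would contradict the converse for the BSC. I would state this in a sentence or two and move on, since the real content is the $(1-4p)^+$ bound.

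For the $(1-4p)^+$ bound, the plan is a ``babble-and-push'' adversarial strategy exploiting causality. The key idea: since the encoder is deterministic and the codebook is public, Calvin watches the first half of the transmission, $x_1,\dots,x_{n/2}$, and uses a small budget (say $2pn$... no — here I must be careful with the accounting) to corrupt a prefix so as to make the received prefix consistent with \emph{many} codewords, then in the second half he ``commits'' by either doing nothing or by pushing the suffix toward a competing codeword's suffix. More precisely, I would argue: if the rate is positive, then for a typical codeword $\bx$ there are exponentially many codewords agreeing with $\bx$ on long stretches; Calvin, at the midpoint, identifies a codeword $\bx'$ that agrees with the already-transmitted prefix of $\bx$ to within $2pn$ errors on the first half, so that with $pn$ flips he can make the prefix exactly equal to $\bx'$'s prefix — but he doesn't know which of $\bx,\bx'$ is real, so he must hedge. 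The clean way to formalize the obstruction is to reduce to a \emph{list-decoding / Plotkin-type} argument: the decoder, seeing the received word, must distinguish $\bx$ from any $\bx'$ whose first half Calvin could have already aligned; if two such codewords have suffixes that are too close (within $2pn$ Hamming distance on the second half), Calvin can, in the second half, drive the received suffix to a point equidistant-in-the-allowed-ball from both, forcing a decoding error. Counting: the prefix constraint costs $\le pn/2$... I'll let the detailed constants fall out, but the arithmetic $1-4p$ comes from splitting the budget $pn$ over two halves of length $n/2$, giving an effective relative radius $2p$ in each half, and a Plotkin-type bound $1-2(2p) = 1-4p$ on the rate of the set of ``confusable suffixes.''

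The two combinatorial tools named in the introduction slot in here: \textbf{Tur\'an's theorem} is used to find a large subset of codewords that are pairwise confusable (a large clique/independent-set in the ``confusability graph'' on codewords whose prefixes Calvin can align and whose suffixes are close), and the \textbf{Plotkin bound} then caps the size of such a pairwise-close set, forcing its rate below $1-4p$ — contradicting a positive-rate assumption when $p > 1/4$, and the $1-4p$ ceiling in general. So the skeleton is: (i) assume a deterministic $(n,Rn)$-code with $R > (1-4p)^+$ and small error; (ii) by an averaging/Tur\'an argument extract an exponentially large family $\mathcal{F}$ of codewords that are pairwise ``prefix-alignable'' with budget $\le pn$ and whose second halves lie in a Hamming ball structure of radius $\le 2pn$ relative to each other; (iii) observe Calvin's strategy — align the prefix toward an arbitrary member of $\mathcal{F}$, then in the suffix inject the ``confusing'' error — makes Bob's posterior over $\mathcal{F}$ nearly flat, so the error probability is $\ge 1 - 1/|\mathcal{F}|$; (iv) apply Plotkin to bound $|\mathcal{F}|$ and hence $R$, yielding the contradiction.

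\textbf{The main obstacle} I anticipate is step (iii): making precise that Calvin — who does \emph{not} know whether the true codeword is $\bx$ or the decoy $\bx'$, yet whose causal action in the suffix depends on what he's already seen and committed to — can nonetheless guarantee a received word that is genuinely ambiguous to Bob, rather than merely ``close to two codewords.'' The subtlety is that Calvin's prefix corruption is irreversible and must be chosen before he knows the true suffix, so one needs a symmetric construction: pick $\bx,\bx'$ agreeing on their \emph{prefixes exactly} (not just approximately) after Calvin's $\le pn$ flips, so that post-midpoint Calvin faces a situation combinatorially identical whether the truth is $\bx$ or $\bx'$, and then the only thing distinguishing them is the suffix, where the Plotkin-distance bound lets him equalize. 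Getting the budget bookkeeping to close — that the prefix-alignment \emph{and} the suffix-confusion together cost at most $pn$ in every branch — is where the factor-of-$4$ is won or lost, and is the delicate part of the argument.
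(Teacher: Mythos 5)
Your high-level ingredients match the paper's ($1-H(p)$ via BSC simulation; a wait-then-confuse causal attack; Plotkin plus Tur\'an), but the specific plan has a genuine gap in exactly the place where the factor of $4$ is won. In the paper's attack Calvin does \emph{not} split the transmission at $\bl/2$ and does \emph{not} spend any budget on the prefix: he waits, silently, for roughly $\rate\bl$ bits (precisely $\ell=(\rate-\e/2)\bl$), so that the set of codewords consistent with the observed prefix is still exponentially large, and the decoy $\bx'$ is drawn from that consistent set --- hence $\bx$ and $\bx'$ agree on the prefix \emph{for free}, with no corruption and no causality problem. The entire budget $p\bl$ is then available for the suffix, whose length is $\bl-\ell\approx(1-\rate)\bl<4p\bl$ once $\rate>1-4p$; Plotkin on blocks of that length forces any family of pairwise-far suffixes (distance $\ge 2p\bl-\e\bl/8$, i.e.\ more than half the suffix length) to have \emph{constant} size, and Tur\'an then converts the small independence number into many edges, so two independently uniform codewords of the consistent set are within $2p\bl$ of each other with constant probability. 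Your accounting (``split $p\bl$ over two halves of length $\bl/2$, effective radius $2p$ per half, Plotkin gives $1-4p$'') does not produce the bound: with a cut at $\bl/2$ the consistent set is large only when $\rate>1/2$ and the Plotkin step needs $2p\bl>\bl/4$, so you would only reach a conclusion for $p>1/8$ and $\rate>1/2$, not $\rate>(1-4p)^+$. Moreover the prefix-alignment idea is not just wasteful but causally infeasible: Calvin cannot target a decoy's prefix while the prefix is still passing, because he has not yet seen enough of $\bx$ to know which codewords are candidates; the fix is the paper's, namely to abandon prefix corruption entirely.

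The second gap is the one you yourself flag as the main obstacle, step (iii), and it is not closed by ``drive the received suffix to an equidistant point.'' A deterministic equidistant $\by$ does not give Bob a tied posterior, because the word Calvin would produce when Alice sent $\bx$ (decoy $\bx'$) need not coincide with the word he would produce when Alice sent $\bx'$ (decoy $\bx$), so the two hypotheses are not symmetric and Bob may still decode correctly. The paper's resolution is a \emph{randomized} push: flip each bit on which $\bx$ and $\bx'$ disagree independently with probability $1/2$; a concentration bound (Sanov/Chernoff) shows the number of flips stays below $p\bl$ w.h.p.\ since $d_H(\bx,\bx')<2p\bl-\e\bl/8$, and for every such $\by$ one has $p(\by\mid\bx)=p(\by\mid\bx')$ exactly, which together with $p(\bx)=p(\bx')$ (uniform messages, deterministic encoder) forces $p(\bx\mid\by)=p(\bx'\mid\by)$ and an average error of $1/2$ on the swapped pair of events. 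Note also that the conclusion obtained this way is a \emph{constant} error probability (about $\e/512p$), which is what bounds the weak capacity; your target of error $\ge 1-1/|\mathcal{F}|$ over a large flat clique is stronger than needed and harder to establish, since small independence number gives many edges but not directly a posterior that is flat over an exponentially large clique.
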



\begin{theorem}[Probabilistic encoder]
\label{the:prob}
For probabilistic codes,  $\Ca^{\tt{s}} \leq \min\{1-H(p),(1-4p)^+\}$.
\end{theorem}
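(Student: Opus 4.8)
We must exhibit, for every probabilistic code of rate exceeding $\min\{1-H(p),(1-4p)^+\}$, a causal adversary forcing an error probability too large for strong achievability; since the bound is a minimum, it suffices to beat each of the two terms separately. The term $1-H(p)$ is the ``natural'' one: Calvin ignores $\bx$ and flips each transmitted bit independently with probability $p-\e$, halting once he has used his budget of $pn$ flips. By a Chernoff bound the halting event has probability $e^{-\Omega(n)}$, so Bob sees a channel within vanishing statistical distance of a $\mbox{BSC}_{p-\e}$, and the classical converse for that channel~\cite{CT06} shows the error cannot be made to vanish once the rate exceeds $1-H(p-\e)$; letting $\e\to0$ gives $\Ca^{\tt{s}}\le\Ca^{\tt{w}}\le 1-H(p)$. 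This half does not use the structure of the encoder at all.

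The term $(1-4p)^+$ is the heart, and the plan is a ``listen--then--push'' symmetrizing attack. Suppose a code of rate $R>(1-4p)^++\eta$ has small error, and set the listening length $\ell=(1-4p+\eta)n$ when $p<1/4$ and $\ell=0$ when $p>1/4$ (the boundary $p=1/4$ then follows since $\Ca^{\tt{s}}$ is non-increasing in $p$). Pigeonholing codewords by their length-$\ell$ prefix, some prefix $P_0$ is shared by a set $S$ of $2^{\Omega(n)}$ codewords. Restricted to the remaining $m=n-\ell\le 4pn$ coordinates, $S$ is a binary code of length $m$ with super-constantly many codewords, so the Plotkin bound~\cite{plotkin1,brouwer1} produces $\bx=\bx(\mess)$ and $\bx'=\bx(\mess')$ in $S$ at Hamming distance $\le m/2+o(n)\le 2pn$ (the $o(n)$ and parity loss are absorbed by the slack $\eta$, and by strict inequality when $p>1/4$). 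Let $D$ be the coordinate set where $\bx$ and $\bx'$ differ, so $D\subseteq\{\ell+1,\dots,n\}$ and $|D|\le 2pn$, and let $\mbf{z}$ agree with $\bx$ (equivalently $\bx'$) off $D$ and equal $\bx$ on half of $D$ and $\bx'$ on the other half, so $d(\bx,\mbf{z})=d(\bx',\mbf{z})=|D|/2\le pn$.

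Calvin stays idle until the length-$\ell$ prefix is observed; if it is not $P_0$ he stays idle forever; otherwise he keeps reading, lets $j>\ell$ be the first coordinate at which $\bx$ and $\bx'$ disagree, and from coordinate $j$ on flips coordinate $i$ precisely when $\mbf{z}_i\neq x_i$ (where $x_i$ is the bit just observed). This is causal --- no bit is flipped before $j$, and each later flip depends only on the observed bit and the precomputed $\mbf{z}$ --- and it obeys the power constraint, since when the transmitted codeword is $\bx$ or $\bx'$ the number of flips is $|D|/2\le pn$; in both of those cases the output equals $\mbf{z}$, so Bob cannot tell $\mess$ from $\mess'$, and the code's error is $\ge 1/|\cU|$. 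To turn this into a real contradiction one strengthens the combinatorics: inside each large prefix class the ``far'' graph (distance $>m/2$) has cliques of size $O(1)$ by Plotkin, so by Tur\'an's theorem~\cite{turan} the ``close'' graph has constant edge-density, hence a matching of size $\Omega(|S|)$; since $R>(1-4p)^++\eta$ all but an $e^{-\Omega(n)}$ fraction of codewords lie in such classes, so a single attack confuses a matching covering an $\Omega(1)$ fraction of $\cU$ and forces error $\Omega(1)$ --- this, with $\eta\to0$, is Theorem~\ref{the:det}.

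For probabilistic codes (Theorem~\ref{the:prob}) the same architecture is run with ``number of codewords in a prefix class'' replaced throughout by ``probability mass of the class'', controlled by the fact that the entropy of the transmitted codeword given its length-$\ell$ prefix is at least $Rn-\ell-o(n)=\Omega(n)$ (by Fano's inequality and the small-error assumption), which forces a non-vanishing share of the codeword mass into prefix classes of conditional entropy $\Omega(n)$; a weighted Plotkin/Tur\'an step then extracts confusable message pairs carrying a non-negligible fraction of that conditional mass. I expect this passage from a counting bound to a probability-mass bound to be the main obstacle: because the message and codeword distributions can be arbitrarily skewed, the weighted argument only guarantees that the attack confuses pairs of total probability $\ge 1/\poly(n)$, hence code error $\ge 1/\poly(n)$. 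This defeats the $e^{-\alpha n}$ decay required for \emph{strong} achievability (for every fixed $\alpha>0$ and all large $n$) but not the constant error allowed by weak achievability --- which is precisely why Theorem~\ref{the:prob} bounds $\Ca^{\tt{s}}$ only.
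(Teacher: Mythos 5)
Your BSC$(p-\e)$ argument for the $1-H(p)$ term matches the paper, and your single-pair ``listen--then--push'' observation is sound as far as it goes: for one precomputed pair $\bx,\bx'$ sharing the prefix, flipping toward the fixed midpoint $\mathbf{z}$ is causal, within budget, and confuses that pair. But, as you note, this only gives error $\geq 1/|\cU|$, which is itself exponentially small and so does not even contradict \emph{strong} achievability; everything therefore rests on your strengthening, ``a single attack confuses a matching covering an $\Omega(1)$ fraction of $\cU$,'' and that step is the genuine gap. A single causal attack cannot deterministically push toward the correct pair's midpoint, because right after the prefix Calvin does not know which matched pair the transmitted codeword belongs to: distinct pairs in your matching dictate conflicting flip decisions at the same coordinates, and by the time the bits he has observed narrow the consistent set down to one pair, the coordinates of that pair's disagreement set may already have gone by unflipped. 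So the matching produced by Plotkin$+$Tur\'an does not, by itself, yield one adversarial strategy that realizes all those midpoints, and the claimed $\Omega(1)$ (or, in the probabilistic case, $1/\poly(n)$) error is not established.

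The paper resolves exactly this point by randomizing the attack rather than derandomizing the target: after the prefix, Calvin draws $\bx'$ at random from the consistent set (uniformly for deterministic codes, and according to the conditional codeword distribution $p_{U|_{\bx^\ell}}p_{X(\mess')|_{\bx^\ell}}$ for probabilistic codes) and then flips each later bit where $\bx$ and $\bx'$ disagree independently with probability $1/2$. This is causal, respects the budget with probability $1-2^{-\Omega(\e^2\bl)}$ (Sanov/Chernoff), and gives the exact likelihood symmetry $p(\by|\bx)=p(\by|\bx')$; the Plotkin$+$Tur\'an edge-density bound is then applied to show the \emph{randomly drawn} $\bx'$ is within $2p\bl-\e\bl/8$ of $\bx$ with probability $\Omega(\e/p)$, which is what converts the one-pair observation into a non-negligible error. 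For probabilistic encoders two further issues, which your ``weighted Plotkin/Tur\'an'' sentence does not address, must be handled: Bob compares $p(\by|\bx)p(\bx)$ with $p(\by|\bx')p(\bx')$, so Calvin also needs $p(\bx)\approx p(\bx')$, and he needs $\mess\neq\mess'$ (two close codewords may encode the same message, in which case no error occurs). The paper obtains these by an entropy argument (a chain-rule/Jensen lemma showing $H(p_{U|_{\bx^\ell}})\geq\e\bl/4$ with probability $\e/4$) followed by a partition of the consistent set into $n^2$ cells $\cX_{ij}$ along dyadic ranges of codeword probability and of conditional message probability, extracting one cell with mass $\geq p(\bx^\ell)/n^4$, near-uniform probabilities, and message entropy $\Omega(\e n)$; it is this cell-selection step that costs the $1/\poly(n)$ factor and is why only $\Ca^{\tt{s}}$ is bounded --- your intuition about that loss is correct, but the mechanism producing it is missing from your argument.
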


We note that under a very weak notion of capacity in which one only requires the success probability to be bounded away from zero (instead of approaching $1$), the capacity of the omniscient channel, and thus the binary causal-adversary channel, approaches $1-H(p)$. This follows by the fact that for $\bl$ sufficiently large and $\ell \geq 4$ there exists $(n,Rn)$ codes which are $(\ell,pn)$ {\em list decodable} with $R=1-H(p)(1+1/\ell)$ \cite{Elias91}. Communicating using an $(\ell,pn)$ list decodable code allows Bob to decode a list of size $\ell$ of messages which includes the message transmitted by Alice. Choosing a message uniformly at random from his list, Bob decodes correctly with probability at least $1/\ell$.


\subsection{Outline of proof techniques}\label{sec:outline}
The upper bound of $1-H(p)$ follows directly by describing an attack for Calvin wherein he approximately simulates a BSC($p$) (Binary Symmetric Channel~\cite{CT06} with crossover probability $p$). More precisely, for each $i \in [n]$ and any sufficiently small $\e >0$, Calvin flips $x_i$ with probability $p-\e$ until he runs out of his budget of $p\bl$ bit-flips. By the Chernoff bound~\cite{chernoff1}, with very high probability
he does not run out of his budget, and is therefore indistinguishable from a BSC($p-\e$). But it is well-known~\cite{CT06} that in this case the optimal rate of communication from Alice to Bob is $1-H(p-\e)$.
Taking the limit when $\e \rightarrow 0$ implies our bound.

The upper bound of $(1-4p)^+$ is more involved.
For the case where Alice's encoder is deterministic, the proof of Theorem~\ref{the:det} has the following overall structure.
Assume for sake of contradiction that Alice attempts to communicate at rate greater than $\rate = (1-4p)^+$.
To prove our upper bound we design the following {\em wait-and-push} attack for Calvin.

Calvin starts by waiting for Alice to transmit approximately $\rate \bl$ bits.
As Alice is assumed to communicate at rate greater than $\rate$, the set of Alice's codewords $\cX'$ {\em consistent} with the bits Calvin has seen so far is ``large" with ``high probability".
Calvin constructs $\cX'$ and chooses a codeword $\bx'$ uniformly at random from $\cX'$.
He then actively ``pushes" $\bx$ in the direction of $\bx'$ by flipping, with probability $1/2$, each future $x_i$ that differs from $x_i'$. 
If Calvin succeeds in pushing $\bx$ to a word  $\by$ roughly midway between $\bx$ and $\bx'$, a careful analysis demonstrates that regardless of Bob's decoding strategy, Bob is unable to determine whether Alice transmitted $\bx$ or $\bx'$ --- causing a decoding error of $1/2$ in this case.
So, to prove our bound, we must show that with constant probability (independent of the block length $\bl$) Calvin will indeed succeed in pushing $\bx$ to $\by$.
Namely, that Alice's codeword $\bx$ and the codeword chosen at random by Calvin $\bx'$ are of distance at most $2p\bl$.
Roughly speaking, we prove the above by a detailed analysis of the distance structure of the set of codewords in any code using tools from extremal combinatorics and coding theory.

The case where Alice's encoder may be randomized is more technically challenging, and is considered in Theorem~\ref{the:prob}. 
At a high level, the strategy of Calvin for a probabilistic encoder follows that outlined for the deterministic case.
However, there are two main difficulties in its extended analysis. 
Firstly, the symmetry between $\bx$ and $\bx'$ no longer exists.
Namely, the fact that Bob may not be able to distinguish which of the two were transmitted by Alice does not necessarily cause a significant decoding error, since the probability of $\bx'$ being transmitted by Alice may well be significantly smaller than the probability that $\bx$ was transmitted.
Secondly, the fact that both $\bx$ and $\bx'$ may correspond to the same message $u$ places the entire scheme in jeopardy.
As it now no longer matters if Bob decodes to $\bx$ or $\bx'$, in both cases the decoded message will be that sent by Alice.

To overcome these difficulties, we describe a more intricate analysis of Calvin's attack.
Roughly speaking, we prove that a ``large" subset $\cX''$ of $\cX'$ behaves ``well". 
Any $\bx'$ chosen uniformly at random from $\cX'$, with ``significant" probability, is in $\cX''$, and has three properties corresponding to those when Alice uses a deterministic encoder. That is, $\bx'$ is sufficiently {\em close} to $\bx$ as desired, it has approximately the same probability of transmission that $\bx$ does (thus preserving the needed symmetry), and it also corresponds to a message that differs from that corresponding to $\bx$.
All in all, we show that the above three properties hold with probability $1/\poly(n)$, which suffices to bound the strong capacity of the channel at hand (but not the weak capacity).

In case of a randomized encoder of Alice, we assume that the messages 
may have nonuniform distribution, and also any message is encoded into
one of a set of possible codewords as per some probability distribution
in that set. One may think of various other ways of encoding, for example
the following, to confuse Calvin. But as we discuss in the next
paragraph, such schemes are also covered in our setup.

{\em Multiple codebooks:} In this scheme, Alice maintains a set of
codes $\cC_1, \cC_2, \ldots, \cC_L$. For transmitting a message $u$, she
randomly selects the code $\cC_i$ with probability $q_i$. If the set
of messages is $\cU = \{1,2,\ldots, M\}$ with a probability distribution
given by $p_i \defn Pr \{u = i\}$, and the code $\cC_r$ contains
the codewords $\{\bx (u,r) \mid u=1,2,\ldots,M\}$, then in our setup,
the corresponding codebook for the message $u$ will be
 $\cX (u) = \{\bx (u, r) \mid r=1,2,\ldots, L\}$.
This codebook may have less than $L$ codewords
due to common codewords in the original codes. The induced probability
distribution in this codebook of $u$ is given by $Pr\{\bx|u\}
= \sum_{r: \bx(u,r) = \bx} q_r$. 

If Alice picks a code and uses it to encode several messages, even
then she does not gain anything. First, if she uses the same code
to encode too many messages (and calvin knows the encoding scheme,
as assumed), then both Bob and Calvin will know the code used after
receiving or `reading' some codewords. On the other hand, if a 
randomly chosen code is used only to encode a block of 
few messages this is equivalent to using a longer (`superblock') code
in our setup. The only difference is that the probability of error
analysed in our set up is the probability of error in decoding
the `superblocks' rather than the smaller blocks/codewords.

The proofs of the upper bounds corresponding to $1-H(p)$ have already been sketched in Section~\ref{sec:outline}.
Hence we only provide proofs of the upper bounds corresponding to $(1-4p)^+$ in Theorems~\ref{the:det} and~\ref{the:prob}.

\section{Proof of Theorem~\ref{the:det}}

Let $\rate = (1-4p)^++\e$ for some $\e >0$. Let $\log(.)$ denote the binary logarithm, here and throughout.
By assumption for deterministic codes, Alice's message space $\cU$ is of size $2^{\rate \bl}$.
Here we assume for that $2^{\rate \bl}$ in an integer.
This implies that the set $\cX$ of Alice's transmitted codewords is of size $2^{\rate \bl}$. 
\footnote{In fact, $\cX$ may be smaller, however we note that for codes of optimal rate, $|\cX|$ is of size {\em exactly} $2^{\rate \bl}$. If $|\cX| < 2^{Rn}$, then for some transmitted codeword $\bx$ at least two messages $\mess$ and $\mess'$ must both be encoded to $\bx$. On receiving $\bx$, Bob's probability of error is maximal -- it is at least $1/2$. Therefore changing the codebook so as to encode $\mess'$ as some $\bx' \notin \cX$ cannot increase the probability of decoding error.}

We now present Calvin's attack. We show that for any fixed $\e > 0$, regardless of Bob's decoding strategy, there is a decoding error with constant probability (namely, the error probability is independent of $n$).
Calvin's attack is in two stages.
First Calvin {\em passively} waits until Alice transmits $\ell=(R-\e/2)n$ bits over the channel.
Let $\bx^\ell \in \{0,1\}^{\ell}$ be the value of the codeword observed so far.
He then considers the set of codewords that are consistent with
the observed $\bx^\ell$.
Namely, Calvin constructs the set  $\cX|_{\bx^\ell} = \{\bx=x_1,\ldots, x_n \in \cX \mid x_1,\ldots, x_\ell = \bx^\ell\}$. He then
chooses an element ${\bf x'} \in \cX|_{\bx^\ell}$
uniformly at random.
In the second stage, Calvin follows a {\em random bit-flip strategy}. That is, for each remaining bit $x_i'$ of ${\bf x'}$ that differs from the
corresponding bit $x_i$ of ${\bf x}$ transmitted, he flips the transmitted bit
with probability $1/2$, until he has either flipped $p \bl$ bits, or until $i=\bl$.

We analyze Calvin's attack by a series of claims.
We first show that with high probability (w.h.p.) the set $\cX|_{\bx^\ell}$ is {\em large}.

\begin{claim}
With probability at least $1-2^{- \e n/4}$, the set $\cX|_{\bx^\ell}$ is of size at least $2^{\e \bl/4}$.
\label{claim:suff_set}
\end{claim}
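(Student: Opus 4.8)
The plan is to think of the first $\ell = (R - \e/2)n$ transmitted bits as a random variable $\bX^\ell$ induced by Alice's uniform choice of message, and to bound the probability that the "fiber" $\cX|_{\bX^\ell}$ over the observed prefix is small. First I would set up the natural partition: each prefix $\bx^\ell \in \{0,1\}^\ell$ that actually occurs partitions the codebook $\cX$ into the nonempty classes $\cX|_{\bx^\ell}$, and these classes are disjoint and cover all of $\cX$. Since $|\cX| = 2^{Rn}$, there are exactly $2^{Rn}$ codewords distributed among at most $2^\ell$ prefix-classes. Call a prefix $\bx^\ell$ \emph{bad} if $|\cX|_{\bx^\ell}| < 2^{\e n /4}$. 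The total number of codewords lying in bad classes is at most (number of bad prefixes) $\times\, 2^{\e n/4} \le 2^\ell \cdot 2^{\e n/4} = 2^{(R - \e/2)n + \e n/4} = 2^{Rn - \e n/4}$.

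Next I would convert this codeword count into a probability. Because Alice's message is uniform over $\cU$ and (for a deterministic encoder) $\mess \mapsto \bx(\mess)$ is a bijection onto $\cX$, the transmitted codeword $\bX$ is uniform over $\cX$; hence the probability that $\bX$ falls in a bad prefix-class is exactly (number of codewords in bad classes)$/|\cX| \le 2^{Rn - \e n/4}/2^{Rn} = 2^{-\e n /4}$. But the event "$\cX|_{\bX^\ell}$ is small" is precisely the event that $\bX$ lies in a bad prefix-class (the prefix $\bX^\ell$ of the transmitted codeword determines, and is determined by membership in, its class). Therefore $\Pr[\,|\cX|_{\bX^\ell}| < 2^{\e n/4}\,] \le 2^{-\e n/4}$, which is the complement of the claimed bound.

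The only subtlety — and the main thing to state carefully rather than a genuine obstacle — is the reduction to $|\cX| = 2^{Rn}$ and to a bijective deterministic encoder; both are already justified in the text preceding the claim (the footnote argues that for optimal-rate deterministic codes $|\cX|$ is exactly $2^{Rn}$, and distinct messages map to distinct codewords without loss of optimality). One should also note $R < 1$ so that $\ell < n$ and the prefixes are genuinely shorter than the codewords, though the counting argument above does not actually need this. Everything else is the elementary pigeonhole/averaging computation sketched above; no concentration inequality is required at this step, the $2^{-\e n/4}$ bound being exact from the counting.
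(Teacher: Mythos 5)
Your proposal is correct and follows essentially the same argument as the paper: the paper also counts the messages whose consistent set is small by (number of prefixes) $\times\, 2^{\e n/4} \le 2^{(R-\e/4)n}$ and then (implicitly) divides by the $2^{Rn}$ equiprobable messages to get the $2^{-\e n/4}$ probability bound. You have merely made explicit the uniformity/bijection step that the paper leaves implicit, so there is nothing to change.
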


\begin{proof}
The number of messages $u$ for which $\cX|_{\bx^\ell(\mess)}$ is of size less than $2^{\e n/4}$ is at most the number of  distinct prefixes $\bx^\ell$ times $2^{\e n/4}$, which in turn is at most $2^{\ell + \e n/4} = 2^{(R-\e/4)n}$.
\end{proof}

Now assume that the message $\mess$ is such that its corresponding set $\cX|_{\bx^\ell (\mess)}$ is of size at least $2^{\e n/4}$.
We now show that this implies that the transmitted codeword $\bx$ and the codeword $\bx'$ chosen by Calvin are distinct and of {\em small} Hamming distance apart
with a positive probability (independent of $n$).

\begin{claim}
Conditioned on Claim~\ref{claim:suff_set}, with probability at least $\frac{\e}{64p}$, $\bx \neq \bx'$ and $d_H(\bx,\bx') < 2pn- \e n/8$.
\label{claim:hamm}
\end{claim}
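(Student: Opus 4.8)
The plan is to fix the message $\mess$ with $|\cX|_{\bx^\ell(\mess)}| \geq 2^{\e n/4}$, write $N := |\cX|_{\bx^\ell}|$ and $S := \cX|_{\bx^\ell}$, and note that all codewords in $S$ agree on the first $\ell$ coordinates, so Hamming distances between them are determined entirely by the last $n - \ell = (1-R+\e/2)n$ coordinates. The transmitted codeword $\bx$ lies in $S$, and Calvin picks $\bx'$ uniformly at random from $S$. We need to show that with probability $\geq \e/(64p)$ over this choice we have both $\bx' \neq \bx$ and $d_H(\bx,\bx') < 2pn - \e n/8$. The first step is to dispose of the case $R = (1-4p)^+ = 0$, i.e. $p \geq 1/4$: there the threshold $2pn - \e n/8 \geq n/2 - \e n/8$ and essentially every pair of distinct codewords is close enough — one appeals directly to the Plotkin bound, which caps the number of codewords with pairwise distance $\geq (1/2 + \gamma)n$ at $O(1/\gamma)$, so all but a negligible fraction of $S$ is within the required radius of $\bx$.

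The heart of the matter is the regime $0 < p < 1/4$, where $R = 1 - 4p + \e > 0$ and the threshold radius is $2pn - \e n/8$. Here I would set up an averaging/extremal-graph argument on $S$. Form the graph $G$ on vertex set $S$ whose edges are the pairs $\{\bz, \bz'\}$ with $d_H(\bz, \bz') \geq 2pn - \e n/8$ — the "bad" (far) pairs. The key claim is that $G$ cannot have too many edges, so that for most $\bz \in S$ the number of $\bz' \in S$ that are far from $\bz$ is small; applying this to $\bz = \bx$ and one extra term for $\bz' = \bx$ itself gives the bound. To bound $|E(G)|$, I would invoke the Plotkin-type bound on the \emph{number} of codewords whose pairwise distances are all large: any clique in $G$ is a code of minimum distance $\geq 2pn - \e n/8 = (1/2 - (1/(4p) \cdot \e/8) / \ldots)$ — more precisely, since $p < 1/4$, we have $2pn < n/2$, so the Plotkin bound doesn't immediately apply to cliques directly; instead I'd use the full Plotkin bound in the form: a binary code of $M$ codewords has average pairwise distance at most $\frac{M}{2(M-1)} n \leq n/2$, which means the \emph{average} over all pairs of $d_H$ is at most $n/2$. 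That alone isn't enough since $2pn$ could be below $n/2$. So the real tool must be Tur\'an's theorem: if $G$ (the far-pairs graph) had many edges, Tur\'an would force a large clique, i.e. a large sub-code with all pairwise distances $\geq 2pn - \e n/8$, and on that sub-code one runs the Plotkin bound to get a contradiction with its size being $\gtrsim 2^{\e n/4}$ (exponential). Quantitatively: a code with minimum distance $\delta n$ and $\delta > 1/2$ has $O(1)$ words (Plotkin); here $2p < 1/2$, so I instead need the observation that the codewords of $S$ are \emph{short}, living effectively in the $(n-\ell)$-dimensional suffix, where the relative distance $\frac{2pn - \e n/8}{n - \ell} = \frac{2p - \e/8}{1 - R + \e/2} = \frac{2p - \e/8}{4p + \e/2}$, and for $\e$ small this ratio exceeds $1/2$ precisely because $R = 1-4p+\e$ makes $n - \ell \approx 4pn < 2 \cdot 2pn$. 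That is the crucial arithmetic: $n - \ell = (1 - R + \e/2)n = (4p - \e/2)n$, so $2pn - \e n/8$ is more than half of $n - \ell$, and Plotkin applies \emph{in the suffix space}. Hence any clique in $G$ has size $O(1/\e)$, Tur\'an's theorem then bounds $|E(G)| = O(N^2 \e)$ with a constant I'd track to land at the factor $\e/(64p)$, and averaging over the choice of $\bx'$ finishes the claim.

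The main obstacle I anticipate is getting the constants to work out cleanly — specifically, chaining (i) the precise form of the Plotkin bound valid when the relative distance in the suffix exceeds $1/2$ by a margin proportional to $\e$ (this margin is $\Theta(\e / p)$, since $\frac{2p - \e/8}{4p + \e/2} = \frac12 + \Theta(\e/p)$), (ii) Tur\'an's theorem $|E(G)| \leq (1 - 1/(\omega(G)-1)) N^2/2$ with $\omega(G) = O(p/\e)$, and (iii) the "bad fraction" argument: the number of vertices $\bz$ incident to more than a $\theta$-fraction of far-pairs is at most $2|E(G)|/(\theta N)$, and we want this $< (1 - \e/(64p))N$, i.e. we want the typical vertex — including $\bx$ — to have few far neighbors. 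I would also need to be slightly careful that we are averaging over $\bx'$ for a \emph{fixed} $\bx$, not over a random pair; since $\bx \in S$ is just one specific vertex and $G$'s edges are spread out (no vertex can have all its $S$-neighbors far, again by Plotkin applied centered at $\bx$), the per-vertex far-degree bound $\deg_G(\bx) \leq (\text{Plotkin cap}) \cdot$ something is what actually gives the clean $\e/(64p)$, possibly bypassing Tur\'an entirely for this particular vertex. The subtraction of the single point $\bx' = \bx$ contributes only a $1/N \leq 2^{-\e n/4}$ loss, absorbed into the constant. I'd present the Plotkin application as the engine and Tur\'an as the bookkeeping device that converts "few far-pairs globally" into "few far-partners for $\bx$."
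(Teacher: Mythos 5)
You have the right ingredients---conditioning on the shared prefix so that distances live in the suffix of length $n-\ell=(4p-\e/2)n$, where the threshold $2pn-\e n/8$ exceeds half that length, so a Plotkin-type bound caps any \emph{pairwise}-far family at roughly $16p/\e$ words---but the logical chain you build on top of this breaks in two places. First, Tur\'{a}n's theorem with clique number $O(p/\e)$ in your far-pairs graph does \emph{not} give $|E(G)|=O(\e N^2)$; it gives $|E(G)|\leq (1-\Theta(\e/p))N^2/2$, i.e.\ it only guarantees that the \emph{close}-pairs graph has edge density at least $\Theta(\e/p)$. The far graph can, for instance, be complete multipartite with $16p/\e$ parts, in which case every vertex has $(1-\Theta(\e/p))N$ far neighbours and your intermediate statement that ``most vertices have few far partners'' is false. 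Second, and more importantly, the rescue you propose---a per-vertex bound for the specific transmitted $\bx$ via ``Plotkin centered at $\bx$''---does not exist: Plotkin constrains families whose members are pairwise far, not the number of codewords far from one fixed word. A single $\bx$ can be at distance greater than $2pn-\e n/8$ from exponentially many codewords of $\cX|_{\bx^\ell}$ (take its suffix to be all-zeros and the others to have suffix weight slightly above $2pn$; those others may be pairwise close), so no argument that fixes $\bx$ and randomizes only $\bx'$ can reach the claimed bound.

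The missing idea is the symmetry the paper exploits: for a deterministic code with uniform messages, conditioned on the observed prefix $\bx^\ell$ the transmitted codeword $\bx$ is \emph{itself} uniformly distributed over $\cX|_{\bx^\ell}$, exactly like Calvin's $\bx'$. Hence the relevant quantity is the density of close pairs, not the close-degree of a worst-case vertex. The paper works directly with the close-pairs graph: its independence number is at most $16p/\e$ by the suffix Plotkin bound, Tur\'{a}n in the form $|\cV|/(\Delta+1)\leq 16p/\e$ gives average degree $\Delta\geq \e|\cV|/(32p)$, and the probability that the uniform ordered pair $(\bx,\bx')$ lands on a distinct adjacent pair is at least $|\cE|/|\cV|^2=\Delta/(2|\cV|)\geq \e/(64p)$. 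With that replacement (average over both endpoints rather than a per-vertex bound for a fixed $\bx$) your argument becomes the paper's proof; your suffix arithmetic, the $\Theta(\e/p)$ Plotkin margin, and the negligible diagonal term are all fine.
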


\begin{proof}
Consider the undirected graph $\Graph=(\cV,\cE)$ in which the vertex set $\cV$ consists of the set $\cX|_{\bx^\ell}$ and  two nodes are connected by an edge if their Hamming distance is less than $d=2pn- \e n/8$.
An {\em independent set $\cI$ in $\Graph$} corresponds to a subset of codewords in $\{0,1\}^n$ that are all (pairwise) at distance greater than $d$.

Since the codewords in $\cX|_{\bx^\ell}$ all have the same prefix $\bx^\ell$, one may consider only the suffix (of length $\bl-\ell = 4p\bl - \e \bl/2$) of the codewords in $\cX|_{\bx^\ell}$. Here we assume $p \leq 0.25$, minor modifications in the proof are needed for larger $p$. The set of vectors defined by the suffixes in an independent set $\cI$ of $\Graph$ now corresponds to a binary error-correcting code of length $4p\bl - \e \bl/2$, with $|\cI|$ codewords and minimum distance $d$.	

By Plotkin's bound~\cite{brouwer1} there do not exist binary error correcting codes with more than $\frac{2d}{2d-(4p\bl - \e \bl/2)}+1$ codewords.
Thus $\cI$, any maximal independent set in $\Graph$, must satisfy
\begin{equation}
|\cI| \leq \frac{2(2pn-\e n/8)}{2(2pn -\e n/8)-4pn + \e n/2}+1 = \frac{16p}{\e}
\label{eq:ind}
\end{equation}

By Tur\'{a}n's theorem \cite{turan}, any undirected graph $\Graph$ of size $|\cV|$ and average degree $\Delta$ has an independent set of size at least $|\cV|/(\Delta+1)$. This, along with (\ref{eq:ind}) implies that the average degree of our graph $\Graph$ satisfies
$$
\frac{|\cV|}{\Delta+1}\leq |\cI| \leq \frac{16p}{\e}
$$
This in turn implies that
$$
\Delta \geq \frac{\e |\cV|}{16 p} -1 \geq \frac{\e |\cV|}{32p}
$$
The second inequality is for large enough $\bl$,
since $|\cV|$ is of size at least $2^{\rate \bl}$.
To summarize the above discussion, we have shown that our graph $G$ has {\em large} average degree of size $\Delta \geq \frac{\e |\cV|}{32p}$. We now use this fact to analyze Calvin's attack.

By the definition of deterministic codes, any codeword in $\cX$ is transmitted with equal probability.
Also, by definition both $\bx$ (the transmitted codeword) and $\bx'$ (the codeword chosen by Calvin) are in $\cV=\cX|_{\bx^\ell}$. Hence both $\bx$ and $\bx'$ are uniform in $\cX|_{\bx^\ell}$. This implies that with probability $|\cE|/|\cV|^2$ the nodes corresponding to codewords $\bx$ and $\bx'$ are distinct and connected by an edge in $\Graph$.
This in turn implies that with probability $|\cE|/|\cV|^2$, $\bx \neq \bx'$ and $d_H(\bx,\bx') < 2p\bl- \e \bl/8$, as required.
Now
$$
\frac{|\cE|}{|\cV|^2} = \frac{\Delta|\cV|}{2|\cV|^2} \geq  \frac{\e}{64p}
$$
\end{proof}

Conditioned on Claim~\ref{claim:hamm}, Calvin's codeword $\bx'$ is very close to Alice's transmitted codeword $\bx$.
Specifically, $d_H(\bx,\bx') \in (0,2pn- \e n/8)$.
We now show that if Calvin follows the random bit-flip strategy, from Bob's perspective (w.h.p.), both $\bx$ or $\bx'$ were equally likely to have been transmitted by Alice.



We first show that during Calvin's random bit-flip process, w.h.p., Calvin does not ``run out'' of his budget of $p\bl$ bit flips.

\begin{claim}
Conditioned on Claim~\ref{claim:hamm}, with probability at least $1-2^{-\Omega(\e^2 \bl)}$
\begin{equation}
d_H(\bx,\by) \in \left(\frac{d}{2}-\frac{\e \bl}{16},\frac{d}{2}+\frac{\e \bl}{16}\right).
\label{eq:no_bit_flips}
\end{equation}
\label{claim:budget}
\end{claim}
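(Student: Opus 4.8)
The plan is to analyze Calvin's random bit-flip process as a sum of independent (or near-independent) Bernoulli increments and apply a concentration inequality. Let $S = d_H(\bx,\bx') \in (0, d)$ where $d = 2pn - \e n/8$, and let the positions where $\bx$ and $\bx'$ differ in the suffix (of length $\bl - \ell$) be indexed by $i_1, \dots, i_S$. For each such position, as long as Calvin has not yet exhausted his budget of $p\bl$ flips, he flips $x_{i_j}$ with probability $1/2$ independently. If Calvin never runs out of budget, then $d_H(\bx,\by)$ is distributed as $\text{Binomial}(S, 1/2)$, which concentrates around $S/2$. Since $S < d < 2pn < p\bl \cdot (\text{something} > 1)$... actually we need $S/2 + \text{deviation} \le p\bl$, i.e. roughly $pn - \e n/16 + O(\e n) < p\bl$, so the budget is not a binding constraint with high probability. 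The key point is that $S/2 \approx pn - \e n/16 < pn$, leaving slack of order $\e n$ to absorb the fluctuations.

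The steps I would carry out: First, observe that the number of flipped bits equals $Z := \sum_{j=1}^{S} B_j$ where $B_j$ are i.i.d.\ Bernoulli($1/2$), \emph{provided} the running partial sums never exceed $p\bl$. Second, by Hoeffding's / Chernoff's bound, $\Pr[|Z - S/2| > t] \le 2e^{-2t^2/S}$; taking $t = \e\bl/16$ and using $S \le 2p\bl$ gives $\Pr[|Z - S/2| > \e\bl/16] \le 2^{-\Omega(\e^2\bl)}$. Third, on this good event, the total number of flips is at most $S/2 + \e\bl/16 < pn - \e n/16 + \e n/16 = pn \le p\bl$, so indeed the budget constraint is never active and the idealized Binomial description is valid — this requires a small union-bound argument over prefixes of the flip sequence, or one can simply note that if the unconstrained process stays below $p\bl$ then it coincides with the constrained one. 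Fourth, translate the bound on $Z$ into the claimed bound on $d_H(\bx,\by)$: since $d_H(\bx,\by) = Z$ (Calvin only flips bits where $\bx$ and $\bx'$ differ, and flipping moves $y_i$ from $x_i$ to $x_i'$, so $y_i$ differs from $x_i$ exactly on flipped positions), we get $d_H(\bx,\by) \in (S/2 - \e\bl/16, S/2 + \e\bl/16) \subseteq (d/2 - \e\bl/8, d/2 + \e\bl/16)$; a slight re-examination of constants shows one lands inside $(d/2 - \e\bl/16, d/2 + \e\bl/16)$ if one is careful, or one simply chooses the concentration parameter to match. (One may need to note $S$ could be as small as $1$, but since we only claim an interval around $d/2$ and $S$ close to $d$ is what matters; actually the statement conditions only on $\bx \ne \bx'$ and $S < d$ — hmm, if $S$ is much smaller than $d$ then $d_H(\bx,\by) \approx S/2$ need not be near $d/2$. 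This suggests the claim implicitly also uses a lower bound on $S$, or that the relevant regime from Claim~\ref{claim:hamm} should be read as $d_H(\bx,\bx')$ being close to $d$; I would flag this.)

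**The main obstacle** is the subtle interaction between the budget constraint and the independence structure: strictly speaking the $B_j$ are only i.i.d.\ up to the stopping time when the budget hits $p\bl$, so one cannot naively invoke Hoeffding. The clean fix is a coupling/domination argument: let $\tilde Z$ be the unconstrained Binomial($S$, $1/2$); then the constrained count $Z$ satisfies $Z \le \tilde Z$ always, and on the event $\{\tilde Z \le p\bl\}$ we have $Z = \tilde Z$ (the constraint never binds). Since $S \le 2p\bl$ forces $E[\tilde Z] \le p\bl$ and the deviation needed to violate the budget is $\Omega(\e\bl)$ below... wait, $E[\tilde Z] = S/2$ could be up to $p\bl$, so we need $S < 2p\bl - \Omega(\e\bl)$, which holds since $S < d = 2pn - \e n/8$. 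So $\Pr[\tilde Z > p\bl] \le \Pr[\tilde Z - S/2 > \e\bl/16] \le 2^{-\Omega(\e^2\bl)}$, and on the complement everything works. The second obstacle — the possibility that $d_H(\bx,\bx')$ is far below $d$ — I believe is handled by reading Claim~\ref{claim:hamm} together with the geometry of the argument (Calvin benefits most when $\bx'$ is near $\bx$, and the subsequent claims presumably only need $d_H(\bx,\by) \lesssim pn$ rather than $\approx pn$), but I would double-check whether the authors intend an additional lower bound, e.g.\ replacing the interval with $\left(\frac{d_H(\bx,\bx')}{2} - \frac{\e\bl}{16}, \frac{d_H(\bx,\bx')}{2} + \frac{\e\bl}{16}\right)$ and only later using $d_H(\bx,\bx') < d$.
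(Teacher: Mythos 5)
Your proposal is correct and follows essentially the same route as the paper: the paper likewise treats the number of flips as a binomial sum concentrated within $\e\bl/16$ of its mean (invoking Sanov's theorem rather than your Hoeffding/Chernoff bound, with the same $2^{-\Omega(\e^2\bl)}$ exponent) and uses the slack $d/2 \leq p\bl - \e\bl/16$ to conclude that Calvin's budget constraint never binds, which is the content of your coupling remark. The issue you flag---the interval being centered at $d/2$ rather than at $d_H(\bx,\bx')/2$---is present in the paper's own statement, and its proof handles it exactly in the way you suggest, implicitly reading $d$ as the realized distance (``for smaller values of $d$ the bound is only tighter''), with the downstream claims only needing concentration around $d_H(\bx,\bx')/2$ together with the budget not being exceeded.
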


\begin{proof}
The expected number of locations flipped by Calvin is $d/2 \leq p\bl-\e \bl/16$.
Assume that $d/2 = p\bl-\e \bl/16$ (for smaller values of $d$ the bound is only tighter).
By Sanov's theorem~\cite[Theorem 12.4.1]{CT06}, the probability that the number of bits flipped by Calvin deviates from the expectation $d/2$ by more than $\e \bl/16$ is at most $e^{-\Omega(\e^2 n^2/d)} \leq e^{-\Omega(\e^2 n)}$
for large enough $\bl$.
\end{proof}

It should be noted that $d/2+ \e n/16 \leq pn$, and so
$d_H(\bx,\by) \leq d/2+\e \bl/16$ implies that the number
of bits flipped by Calvin does not exceed $pn$.
Since Calvin possibly flips only the bits of $\bx$ which differ
from the corresponding bits in $\bx^\prime$, (\ref{eq:no_bit_flips}) also implies
\begin{equation}
d_H(\bx^\prime,\by) \in \left(\frac{d}{2}-\frac{\e \bl}{16},\frac{d}{2}+\frac{\e \bl}{16}\right).
\end{equation}

We conclude by proving that if the number of bits flipped by Calvin
lies in the range $(d/2-\e \bl/16, d/2+\e \bl/16)$, then indeed Bob cannot
distinguish between the case in which $\bx$ or $\bx'$ were transmitted.

\begin{claim}
Conditioned on Claim~\ref{claim:budget} Bob makes a decoding error with probability at least $1/2$.
\label{claim:error}
\end{claim}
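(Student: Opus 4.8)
\emph{Proof plan for Claim~\ref{claim:error}.}
The point is that Calvin's random bit-flip stage makes the received word statistically symmetric between Alice's transmitted codeword and Calvin's randomly chosen one, so no decoder can tell which of the two was transmitted. I would work on the joint probability space of Alice's (uniform) message $\bU$, her codeword $\bX=f_D(\bU)$, the observed prefix $\bx^\ell$ (a deterministic function of $\bX$), Calvin's codeword $\bX'$ (uniform on $\cX|_{\bx^\ell}$), Calvin's coin tosses, and the independent randomness of Bob's decoder $h(\cdot)$. It is convenient to introduce the word $\tilde{\bf Y}$ obtained by letting Calvin flip each suffix coordinate in which $\bX$ and $\bX'$ differ independently with probability $1/2$, \emph{ignoring} the budget cap; the true received word equals $\tilde{\bf Y}$ unless the cap is reached. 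Two structural facts drive the proof. First, $\bX$ and $\bX'$ are \emph{exchangeable}: they always share the same length-$\ell$ prefix, and for any two codewords $a,b$ with that common prefix, $\Pr[\bX=a,\bX'=b]$ equals $1/|\cX|$ times the reciprocal of the number of codewords having that prefix, hence is symmetric in $a,b$. Second, conditioned on $\{\bX,\bX'\}=\{a,b\}$ with $a\neq b$, the word $\tilde{\bf Y}$ agrees with both $a$ and $b$ on every coordinate where $a_i=b_i$ (in particular on the whole prefix) and, on each of the $d_H(a,b)$ coordinates where they differ, independently equals $a_i$ or $b_i$ with probability $1/2$ --- a law invariant under swapping $a\leftrightarrow b$. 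Hence $(\bX,\bX',\tilde{\bf Y})\stackrel{d}{=}(\bX',\bX,\tilde{\bf Y})$, and, Bob's coins being independent, also $(\bX,\bX',\tilde{\bf Y},h(\tilde{\bf Y}))\stackrel{d}{=}(\bX',\bX,\tilde{\bf Y},h(\tilde{\bf Y}))$.

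Next I would unwind the conditioning event $G$ --- the intersection of the good events of Claims~\ref{claim:suff_set}, \ref{claim:hamm} and \ref{claim:budget}. On $G$, Calvin flips $d_H(\bX,\tilde{\bf Y})<d/2+\e n/16=pn$ bits, so his budget is never exhausted and the true received word coincides with $\tilde{\bf Y}$; therefore $G$ can be expressed purely in terms of $(\bX,\bX',\tilde{\bf Y})$, and so written it is a \emph{symmetric} function of the unordered pair $\{\bX,\bX'\}$ together with $\tilde{\bf Y}$: Claim~\ref{claim:suff_set} constrains only the common prefix, Claim~\ref{claim:hamm} asserts $\bX\neq\bX'$ and $d_H(\bX,\bX')<d$, and Claim~\ref{claim:budget} asserts that $d_H(\bX,\tilde{\bf Y})$ and $d_H(\bX',\tilde{\bf Y})$ are both within $\e n/16$ of $d_H(\bX,\bX')/2$. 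Two further consequences of $G$: (i) $d_H(\bX,\tilde{\bf Y})<pn$ and $d_H(\bX',\tilde{\bf Y})=d_H(\bX,\bX')-d_H(\bX,\tilde{\bf Y})<pn$, so $\tilde{\bf Y}$ is a legal channel output whether Alice sent $\bX$ or $\bX'$; and (ii) since $\bX\neq\bX'$ and a deterministic $f_D$ is injective ($|\cX|=2^{Rn}$), the messages $f_D^{-1}(\bX)$ and $f_D^{-1}(\bX')$ are distinct.

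The conclusion is then immediate. Conditioning on $G$ and combining the swap-invariance of $(\bX,\bX',\tilde{\bf Y},h(\tilde{\bf Y}))$ with the symmetry of $G$,
\[
\Pr\big[\,h(\tilde{\bf Y})=f_D^{-1}(\bX)\ \big|\ G\,\big]=\Pr\big[\,h(\tilde{\bf Y})=f_D^{-1}(\bX')\ \big|\ G\,\big].
\]
On $G$ the messages $f_D^{-1}(\bX)$ and $f_D^{-1}(\bX')$ are distinct, so the events $\{h(\tilde{\bf Y})=f_D^{-1}(\bX)\}$ and $\{h(\tilde{\bf Y})=f_D^{-1}(\bX')\}$ are disjoint; the two equal conditional probabilities above therefore sum to at most $1$, so each is at most $1/2$. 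Since the true received word equals $\tilde{\bf Y}$ on $G$, Bob decodes a message different from Alice's with probability at least $1/2$ conditioned on $G$, as claimed. (Combined with Claims~\ref{claim:suff_set}--\ref{claim:budget} this forces a decoding error with probability $\Omega(\e/p)$, a positive constant independent of $n$, contradicting reliable communication at rate $(1-4p)^++\e$ and completing Theorem~\ref{the:det}.)

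The step I expect to be the main obstacle is precisely this bookkeeping: verifying that \emph{every} event conditioned on along the chain Claims~\ref{claim:suff_set}--\ref{claim:budget} is genuinely symmetric in $\bX$ and $\bX'$, and that Calvin's budget truncation does not covertly destroy the symmetry of the received word --- which it does not, because on $G$ strictly fewer than $pn$ flips occur, so the truncation never fires and the received word is the unbiased midpoint $\tilde{\bf Y}$. A minor point, easily verified, is that allowing Bob a randomized decoder $h$ changes nothing, its internal randomness being independent of $(\bX,\bX',\tilde{\bf Y})$.
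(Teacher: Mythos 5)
Your proposal is correct and takes essentially the same route as the paper: the paper obtains the same symmetry via Bayes' rule, using $p(\bx)=p(\bx')$ for deterministic codes and $p(\by\mid\bx)=p(\by\mid\bx')$ for every $\by$ satisfying (\ref{eq:no_bit_flips}) (having noted $d/2+\e n/16\le pn$, so the flip budget is never exhausted), and then considers the swapped pair of events (Alice sends $\bx$ and Calvin picks $\bx'$, and vice versa) to conclude an average decoding error of at least $1/2$. Your exchangeability formulation on the joint space, with the untruncated word $\tilde{\mathbf Y}$ and the explicit check that the conditioning event is swap-symmetric and that distinct codewords carry distinct messages, is simply a more carefully formalized rendering of that same symmetry argument.
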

\begin{proof}
By Bayes' Theorem~\cite{feller1}, if Bob receives $\by$, the {\it a posteri}
probability that Alice  transmitted $\bx$, denoted $p(\bx | \by)$, equals $p(\by| \bx)p(\bx)/p(\by)$. Here $p(\bx)$ is the probability (over her encoding strategy) that Alice transmits $\bx$, $p(\by | \bx)$ is the probability (over Calvin's random bit-flipping strategy) that Bob receives $\by$ given that Alice transmits $\bx$, and $p(\by)$ is the resulting probability that Bob receives $\by$.
Similarly, $p(\bx' | \by)=p(\by| \bx')p(\bx')/p(\by)$.
Taking the ratio and noting that for deterministic codes $p(\bx)= p(\bx')$, we have
\begin{equation}
p(\bx | \by)/p(\bx' | \by)=p(\by| \bx)/p(\by| \bx').
\label{eq:cond_prob}
\end{equation}

Since Calvin's random bit-flip strategy involves him flipping bits of
$\bx$ (which are different from the corresponding bits of $\bx^\prime$) with probability $1/2$, for all $\by$ satisfying (\ref{eq:no_bit_flips}),
the probabilities $p(\by| \bx)$ and $p(\by| \bx')$ are equal. This observation and (\ref{eq:cond_prob}) together imply $p(\bx | \by) = p(\bx' | \by)$. 
Thus, Bob cannot distinguish whether $\bx$ or $\bx'$ were transmitted.
Namely, on the pair of events in which Alice transmits $\bx$ and Calvin chooses $\bx'$ and in which Alice transmits $\bx'$ and Calvin chooses $\bx$, no matter which decoding process Bob uses, he will have an average decoding error of at least $1/2$. This suffices to prove our assertion.
\end{proof}

Thus a decoding error happens if the conditions of Claims~\ref{claim:suff_set},~\ref{claim:hamm},~\ref{claim:budget} and~\ref{claim:error} are all satisfied. This happens with probability at least
$\left( 1-2^{- \e n/4} \right ) \left( \frac{\e}{64p} \right ) \left (1-2^{-\Omega(\e^2 \bl)} \right ) \left( \frac{1}{2} \right ) \geq \left( \frac{1}{2} \right ) \left( \frac{\e}{64p} \right ) \left (\frac{1}{2} \right ) \left( \frac{1}{2} \right ) \geq \frac{\e}{512p} $ for large enough $\bl$.

\hfill $\blacksquare$

\section{Proof of Theorem~\ref{the:prob}}

We start by proving the following technical Lemma that we use in our proof.
Let $q$ be an arbitrary probability distribution over an index set $I=\{1,\dots,k\}$. 
Let $\mathbf{A_1},\ldots,\mathbf{A_{k}}$ be arbitrary discrete random variables with probability distributions $q_1,\ldots,q_{k}$ over alphabets $\cA_1,\ldots,\cA_{k}$ respectively. Let $k_i = |\cA_i|$. Let $\mathbf{A}$ be a random variable that equals the random variable $\mathbf{A_i}$ with probability $q(i)$. Then the  following Lemma describing an elementary property of the entropy function $H(.)$ is useful in the proof of Theorem~\ref{the:prob}.

\begin{lemma} The entropies of $\mathbf{A}, \mathbf{A_1},\ldots,\mathbf{A_k}$ and $q$ satisfy
$H(\mathbf{A}) \leq \sum_{i = 1}^k q(i) H(\mathbf{A_i}) + H(q)$, with equality if and only if for each $i,i'$ for which both $q(i)$ and $q(i')$ are positive it holds that $\Pr _{q_i,q_{i'}}[\mathbf{A_i} =  \mathbf{A_{i'}}] = 0$.
\label{lem:chain_rule}
\end{lemma}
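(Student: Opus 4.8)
The plan is to prove this via the chain rule for entropy applied to an augmented random variable. Introduce an auxiliary ``selector'' random variable $\mathbf{I}$ over the index set $I = \{1,\dots,k\}$ with distribution $q$, coupled to $\mathbf{A}$ so that, conditioned on $\mathbf{I} = i$, the variable $\mathbf{A}$ is distributed as $\mathbf{A_i}$ (with distribution $q_i$). By construction the marginal of $\mathbf{A}$ under this coupling is exactly the mixture described in the statement, so it suffices to reason about the joint variable $(\mathbf{I}, \mathbf{A})$.

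First I would expand $H(\mathbf{I}, \mathbf{A})$ in the ``natural'' order: $H(\mathbf{I}, \mathbf{A}) = H(\mathbf{I}) + H(\mathbf{A}\mid \mathbf{I}) = H(q) + \sum_{i=1}^k q(i) H(\mathbf{A_i})$, using that conditioned on $\mathbf{I}=i$ the conditional entropy of $\mathbf{A}$ is $H(\mathbf{A_i})$. Then I would expand the same joint entropy in the other order: $H(\mathbf{I}, \mathbf{A}) = H(\mathbf{A}) + H(\mathbf{I}\mid \mathbf{A})$. Since $H(\mathbf{I}\mid \mathbf{A}) \geq 0$, combining the two expansions yields $H(\mathbf{A}) \leq \sum_{i=1}^k q(i) H(\mathbf{A_i}) + H(q)$, which is the claimed inequality.

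For the equality characterization, the inequality above is tight precisely when $H(\mathbf{I}\mid \mathbf{A}) = 0$, i.e.\ when $\mathbf{I}$ is (almost surely) a deterministic function of $\mathbf{A}$. I would translate this as follows: for every value $a$ in the support of $\mathbf{A}$, there is a unique index $i$ with $q(i) > 0$ and $q_i(a) > 0$; equivalently, the supports of the distributions $q_i$, taken over indices $i$ with $q(i) > 0$, are pairwise disjoint. Finally I would observe that pairwise disjointness of these supports is exactly the stated condition: for $i \neq i'$ with $q(i), q(i') > 0$, the quantity $\Pr_{q_i,q_{i'}}[\mathbf{A_i} = \mathbf{A_{i'}}] = \sum_a q_i(a) q_{i'}(a)$ vanishes if and only if the supports of $q_i$ and $q_{i'}$ do not intersect. (I would note that the condition is understood for distinct $i,i'$, since for $i=i'$ it cannot hold non-trivially; the single-support case $H(q)=0$ is then also consistent with equality.)

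This lemma is essentially a packaging of standard entropy identities, so there is no serious technical obstacle; the only point requiring care is making the coupling between $\mathbf{I}$ and $\mathbf{A}$ explicit and then cleanly converting the information-theoretic equality condition $H(\mathbf{I}\mid\mathbf{A})=0$ into the combinatorial support-disjointness statement as phrased in the lemma.
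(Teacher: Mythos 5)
Your proof is correct, and it takes a somewhat different route from the paper's. The paper argues directly on distributions: it expands the mixture probability $p(a)=\sum_{i:a\in\cA_i}q(i)q_i(a)$, bounds $-\sum_a p(a)\log p(a)$ term by term via Jensen's inequality by the double sum $-\sum_i\sum_j q(i)q_i(j)\log\bigl(q(i)q_i(j)\bigr)$, and then splits that double sum algebraically into $\sum_i q(i)H(\mathbf{A_i})+H(q)$; the equality case is read off from the equality condition in Jensen, namely that each $a$ with positive probability receives mass from a unique index $i$. You instead make the selector explicit as a random variable $\mathbf{I}\sim q$ and compare the two chain-rule expansions $H(\mathbf{I},\mathbf{A})=H(q)+\sum_i q(i)H(\mathbf{A_i})$ and $H(\mathbf{I},\mathbf{A})=H(\mathbf{A})+H(\mathbf{I}\mid\mathbf{A})$, so the inequality is just $H(\mathbf{I}\mid\mathbf{A})\ge 0$ and equality becomes the transparent statement that $\mathbf{I}$ is almost surely a function of $\mathbf{A}$, which you correctly translate into pairwise disjointness of the supports of the $q_i$ with $q(i)>0$, i.e.\ $\Pr_{q_i,q_{i'}}[\mathbf{A_i}=\mathbf{A_{i'}}]=\sum_a q_i(a)q_{i'}(a)=0$. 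The two arguments bound $H(\mathbf{A})$ by the same joint entropy, so they are close in spirit, but your version buys a cleaner equality analysis at the cost of introducing the coupling, while the paper's bare-hands computation avoids any auxiliary variable. You were also right to flag that the lemma's equality condition must be read for distinct $i\neq i'$, since for $i=i'$ the collision probability of a discrete distribution is never zero; the paper's statement is slightly loose on this point and your reading matches the intended one.
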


\begin{proof} 
For any $a \in \cA$, the probability $\Pr\{\mathbf{A} = a\}=p(a)$ of occurrence of $a$, equals $\sum_{i: a \in \cA_i}q(i)q_i(a)$. 
Hence
\begin{eqnarray}
H(\mathbf{A}) & = & - \sum_{a \in \bigcup_i\cA_i} p(a) \log(p(a))\nonumber \\
& \leq & -\sum_{i=1}^k\sum_{j=1}^{k_i}q(i)q_i(j) \log(q(i)q_i(j)) \label{eq:jensen}\\
  & =  & \sum_{i=1}^k\sum_{j=1}^{k_i}q(i) \left ( q_i(j) \log(q_i(j)) \right ) \nonumber  \\
   &\  & \ \ \ \ \ \ \ \ \ \ + \sum_{i=1}^k\sum_{j=1}^{k_i} q_i(j) \left ( q(i) \log(q(i)) \right ) \nonumber \\
  & = &\sum_{i = 1}^k q(i) H(\mathbf{A_i}) + H(q). \nonumber
\end{eqnarray}
Here (\ref{eq:jensen}) follows from Jensen's inequality, e.g.~\cite{CT06}, with equality if and only if for each positive $\Pr\{\mathbf{A} =a\}$, there is a unique 
$i$ such that $q(i)q_i(j) > 0$ (here $a_i(j)=a$).
\end{proof}

We now turn to prove Theorem~\ref{the:prob}.
Recall our notation:
let $\bU$ be the random variable corresponding to Alice's
message and $p_U$ its distribution (with entropy $
\rate \bl$). Throughout we assume the message set $\cU$ (the support of $\bU$) is at most of size $2^n$.
Let  $\cX$ be Alice's codebook. 
$\cX$ is a collection $\{\cX(\mess)\}$ of subsets of
$\{0,1\}^\bl$. For each subset $\cX(\mess) \subset \cX$, there is a
corresponding codeword random variable $\bX(\mess)$ with
codeword distribution $p_{X(\mess)}$ over $\cX(\mess)$.
For any value $\bU=\mess$ of the message, Alice's encoder choses a
codeword from $\cX(\mess )$ randomly from the distribution
$p_{X(\mess)}$. 
Alice's message distribution $p_U$, codebook $\cX$, and all the codebook
distributions $p_{X(\mess)}$ are all known to both Bob and Calvin,
but the values of the random variables $\bU$ and $\bX(.)$ are
unknown to them. If $\cX(\mess) = \{\bx (\mess, r) : r \in \Lambda_\mess\}$,
then the transmitted
codeword $\bX(\bU)$ has the probability distribution given by
$\Pr [\bX(\bU) = \bx (\mess, \ind)] = p_U(u)p_{X(\mess)}(\bx(\mess,\ind))$.
Let $p$ the the overall distribution of codewords $\bx = \bx(\mess,r)$ of Alice.
It holds that $p(\bx(\mess,\ind))=p_U(u)p_{X(\mess)}(\bx)$ and 
$p(\bx)=\sum_{\cU}{p_U(u)p_{X(\mess)}(\bx)}$.

For any $\e>0$, let $\rate = (1-4p)^++\e$.
We start by specifying Calvin's attack. 
Calvin uses a very similar attack to the one described in the proof of Theorem~\ref{the:det}.
That is, Calvin first {\em passively} waits until Alice transmits $\ell=(R-\e/2)n$ bits over the channel.
Let $\bx^\ell \in \{0,1\}^{\ell}$ be the value of the codeword observed so far.
He then considers the set of codewords $\bx(\mess,\ind)$ {\em consistent} with
the observed $\bx^\ell$.
Here and throughout this section, we denote codewords by their corresponding message $\mess$ and index $\ind$ in $\cX(u)$.
As it may be that $\bx(\mess,\ind)$ is exactly the same codeword as $\bx(\mess',\ind')$, the sets in the definitions to follow and in this section are in a sense {\em multisets}.
Namely, Calvin constructs the set  $\cX|_{\bx^\ell} = \{\bx(\mess,\ind)=x_1,\ldots, x_n \in \cX \mid x_1,\ldots, x_\ell = \bx^\ell\}$. Let $p(\bx^\ell)=p(\cX|_{\bx^\ell})$ be the probability, under the probability distribution $p$, corresponding to the event that Calvin observes $\bx^\ell$ in the first $\ell$ transmissions. Let $p_{U|_{\bx^\ell}}$ 
and $p_{X(\mess)|_{\bx^\ell}}$ 
be the probability distributions $p_U$ 
and $p_{X(\mess)}$ also respectively 
conditioned on the same event.
Calvin then
chooses an element ${\bx'(\mess',\ind')} \in \cX|_{\bx^\ell}$ with probability\footnote{This is one significant difference from the attack in the proof of Theorem~\ref{the:det} -- there Calvin chooses each $\bx'$ uniformly at random from the corresponding consistent set.}
 $p_{U|_{\bx^\ell}}(\mess')p_{X(\mess')|_{\bx^\ell}}(\bx'(\mess',\ind'))$. In the second stage he then follows exactly the same {\em random bit-flip strategy} as in the proof of Theorem~\ref{the:det}.

Recall that in the proof of Theorem~\ref{the:det}, our goal was to prove that with some constant probability, the distance between $\bx(\mess,\ind)$ and $\bx'(\mess',\ind')$ is approximately $2p\bl$. Loosely speaking, this allows the success of Calvin's attack (i.e., imply a decoding error). Following the same outline of proof, we now show that with probability $1/\poly{(n)}$ the codeword $\bx'(\mess',\ind')$ chosen by Calvin has the following three properties:
\begin{itemize}
\item It's corresponding message differs from that corresponding to $\bx(\mess,\ind)$ (i.e., $\mess \ne \mess'$).
\item $\bx'(\mess',\ind')$ is {\em close} to $\bx(\mess,\ind)$ and thus Calvin will be able to ``push'' $\bx(\mess,\ind)$ to a codeword $\by$ at approximately the same distance from $\bx(\mess,\ind)$ and $\bx'(\mess',\ind')$.
\item Given $\by$, Bob is unable to distinguish whether $\bx(\mess,\ind)$ or $\bx'(\mess',\ind')$ was transmitted.
\end{itemize}
To this end, we partition the set $\cX|_{\bx^\ell}$ into $n^2$ disjoint subsets $\cX_{ij}$ for $i,j \in \{1,2,\dots,n\}$. 
Let $p(\cX_{ij})$ be the probability mass of $\cX_{ij}$.
Let $p_{U|_{ij}}$ and $p_{X(\mess)|_{ij}}$ be the probability distributions $p_U$ and $p_{X(\mess)}$ respectively conditioned on the event that Alice transmitted $\bx(\mess,\ind)$ in $\cX_{ij}$.
The partition $\cX_{ij}$ is obtained in two steps -- first we partition $\cX|_{\bx^\ell}$ into $n$ subsets $\cX_i$, then we partition each $\cX_i$ into $n$ sets $\cX_{ij}$. We also use the probability distribution $p(\cX_i)$, $p_{U|_{i}}$ and $p_{X(\mess)|_{i}}$ defined accordingly.
All in all, we prove the existence of a subset $\cX_{ij}$ with the following properties
\begin{itemize}
\item $H(p_{U|_{ij}})$ is ``large".
\item $p(\cX_{ij})$ is large with respect to $p(\bx^\ell)$.
\item For any $\bx(\mess,\ind) \in \cX_{ij}$ it holds that $p(\bx(\mess,\ind))$ has approximately  the same value.
\item $p_{U|_{ij}}$ is approximately uniform on its support.
\end{itemize}
Roughly speaking, proving these properties on $\cX_{ij}$ reduces us to the case of a deterministic encoder (addressed in Theorem~\ref{the:det}) and allows us to complete our proof.

We now present our proof for the existence of $\cX_{ij}$ as specified above.
We first show that with positive probability the set $\cX|_{\bx^\ell}$ has {\em high} entropy.

\begin{claim}
\label{claim:entropy1}
With probability at least $\e/4$,  $H(p_{U|_{\bx^\ell}}) \geq \e n/4$.
\end{claim}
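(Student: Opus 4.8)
\emph{Proof proposal.} The plan is a routine entropy–averaging argument, applied to the random prefix observed by Calvin. Let $\bX^\ell$ denote the random variable corresponding to the length-$\ell$ prefix $\bx^\ell$, where $\ell=(\rate-\e/2)n$, with the randomness coming from Alice's message $\bU$ and her choice of codeword. First I would note that, since $\bX^\ell$ takes values in $\{0,1\}^\ell$, we have $H(\bX^\ell)\le \ell$, and hence $I(\bU;\bX^\ell)\le H(\bX^\ell)\le \ell$. Therefore
\[
H(\bU \mid \bX^\ell) = H(\bU) - I(\bU;\bX^\ell) \ \ge\ \rate n - \ell \ =\ \tfrac{\e n}{2},
\]
using $H(\bU)=\rate n$ and the choice of $\ell$.

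Next I would rewrite the left-hand side as an expectation over Calvin's observation: by definition, $H(\bU\mid\bX^\ell)=\sum_{\bx^\ell}p(\bx^\ell)\,H(p_{U|_{\bx^\ell}})=\mE[Z]$, where $Z\defn H(p_{U|_{\bX^\ell}})$ is the (random) posterior entropy of the message given the observed prefix. By the standing assumption of this section that $|\cU|\le 2^n$, every realization satisfies $0\le Z\le n$. Then I would apply a reverse Markov inequality: writing $\pi=\Pr[\,Z\ge \e n/4\,]$ and bounding $Z$ by $n$ on the event $\{Z\ge \e n/4\}$ and by $\e n/4$ on its complement,
\[
\tfrac{\e n}{2}\ \le\ \mE[Z]\ \le\ n\,\pi + \tfrac{\e n}{4}(1-\pi)\ \le\ n\,\pi + \tfrac{\e n}{4},
\]
which gives $\pi\ge \e/4$ — exactly the assertion of the claim.

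I do not expect any genuine obstacle in this step; it is short and purely information-theoretic. The only two points requiring minor care are (i) invoking the bound $|\cU|\le 2^n$ to get the upper bound $Z\le n$, without which the reverse-Markov step fails, and (ii) verifying that $\ell=(\rate-\e/2)n$ is small enough that $\rate n-\ell=\tfrac{\e n}{2}$ is a strictly positive $\Theta(\e n)$ quantity, which it plainly is.
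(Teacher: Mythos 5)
Your proposal is correct and follows essentially the same route as the paper: the paper's use of Lemma~\ref{lem:chain_rule} with $q$ the prefix distribution is exactly your inequality $H(\bU)\le H(\bU\mid \bX^\ell)+H(\bX^\ell)$ (i.e.\ $I(\bU;\bX^\ell)\le\ell$), followed by the same Markov-type averaging step using $H(p_{U|_{\bx^\ell}})\le n$. You merely spell out the reverse-Markov computation that the paper leaves implicit, so no substantive difference.
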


\begin{proof} 
Let $q$ be the probability distribution over $\{0,1\}^\ell$ for which $q(\bx^\ell) = p(\bx^\ell)$ for all possible $\bx^{\ell} \in \{0,1\}^\ell$. Let $q_{\bx^\ell}$ be the probability distribution $p_{U|_{\bx^\ell}}$. Now using Lemma~\ref{lem:chain_rule} we obtain
\begin{equation}
H(p_U) \leq \sum_{\bx^\ell} q(\bx^\ell) H(p_{U|_{\bx^\ell}}) + H(q).
\label{eq:cr1a}
\end{equation}
By our definitions $H(p_U) = \rate \bl$.
Moreover, $H(q) \leq \ell = (\rate-\e/2)\bl$ (since $q$ is defined over an alphabet of size $2^{\ell}$). 
Thus (\ref{eq:cr1a}) becomes
$$
\sum_{\bx^\ell} q(\bx^\ell) H(p_{U|_{\bx^\ell}}) \geq \rate \bl - (\rate-\e/2)\bl = \e\bl/2.
\label{eq:cr1}
$$
As the average of $H(p_{U|_{\bx^\ell}})$ is at least $\e\bl/2 $, then $H(p_{U|_{\bx^\ell}}) \geq \e \bl/4$ with probability at least $\e/4$ (by a Markov type inequality, here we use the fact that $H(p_{U|_{\bx^\ell}}) \leq \bl$).
\end{proof}

We now define the sets $\cX_i$.
For $i = 1,\dots,n-1$, let $\cX_i$ be the set of codewords in $\cX|_{\bx^\ell}$ for which $p(\bx(\mess,\ind))/p(\bx^\ell)$ is in the range $(2^{-3i},2^{-3i+3}]$.
The set $\cX_n$ is defined to be the set of codewords in $\cX|_{\bx^\ell}$ for which $p(\bx(\mess,\ind))/p(\bx^\ell)$ is in the range $[0,2^{-3n+3}]$.
Let $p(\cX_i)$ be the probability mass of $\cX_i$.
Namely $p(\cX_i) \simeq 2^{-3i}|\cX_i|p(\bx^\ell)$.
Let $q$ be the distribution over $\{1,2,\dots,n\}$ taking $i$ w.p. $p(\cX_i)/p(\bx^\ell)$.
Notice that $H(q) \leq \log(n) = o(n)$ (as its support is of size $n$).
Conditioning on Claim~\ref{claim:entropy1} and using Lemma~\ref{lem:chain_rule} it can be verified that 

\begin{claim}
\label{claim:entropy2}
\begin{equation}
\label{eqn:cr2}
\sum_i{q(i)H(p_{U|_{i}})} \geq H(p_{U|_{\bx^\ell}}) - H(q) \geq \e \bl/8
\end{equation}
\end{claim}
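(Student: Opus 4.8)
The plan is to invoke Lemma~\ref{lem:chain_rule} once more, this time viewing the conditional message distribution $p_{U|_{\bx^\ell}}$ as a mixture of the distributions $p_{U|_i}$ over the subsets $\cX_i$. Concretely, I would set $q$ to be the distribution on $\{1,\dots,n\}$ that picks index $i$ with probability $p(\cX_i)/p(\bx^\ell)$, and take $\mathbf{A_i}$ in the lemma to be the message random variable $\bU$ conditioned on lying in $\cX_i$, whose distribution is $p_{U|_i}$. Since the sets $\cX_i$ partition $\cX|_{\bx^\ell}$, the random variable $\bU$ conditioned on $\bx^\ell$ is exactly the mixture $\mathbf{A}$ of the $\mathbf{A_i}$'s under $q$; hence Lemma~\ref{lem:chain_rule} gives $H(p_{U|_{\bx^\ell}}) \leq \sum_i q(i) H(p_{U|_i}) + H(q)$, which rearranges to the first inequality $\sum_i q(i) H(p_{U|_i}) \geq H(p_{U|_{\bx^\ell}}) - H(q)$.

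For the second inequality I would use that we are conditioning on the event of Claim~\ref{claim:entropy1}, so $H(p_{U|_{\bx^\ell}}) \geq \e n/4$, together with the crude bound $H(q) \leq \log n$ (since $q$ is supported on $n$ points), which is $o(n)$. Thus $H(p_{U|_{\bx^\ell}}) - H(q) \geq \e n/4 - o(n) \geq \e n/8$ for all sufficiently large $\bl$, completing the chain of inequalities in~(\ref{eqn:cr2}).

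One subtlety worth flagging — and the only place the argument needs a little care rather than being purely mechanical — is the multiset issue already raised in the surrounding text: the same string in $\{0,1\}^n$ may appear in $\cX|_{\bx^\ell}$ under several $(\mess,\ind)$ pairs. The clean way to handle this is to index everything by the message pair $(\mess,\ind)$ (equivalently, to work with the probability space underlying $p$ rather than with codewords as strings), so that the partition into the $\cX_i$'s is genuinely a partition of that index set and Lemma~\ref{lem:chain_rule} applies verbatim with $\mathbf{A}=\bU$; the identity $H(\bU \mid \text{observe } \bx^\ell) = H(p_{U|_{\bx^\ell}})$ then holds by definition. With this bookkeeping in place there is no real obstacle: the claim is simply the grouping/chain-rule inequality for entropy applied at the level of the buckets $\cX_i$, plus the observation that the $\log n$ overhead from the bucket index is negligible against the $\Theta(\e n)$ entropy guaranteed by the previous claim.
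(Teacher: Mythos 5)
Your proposal is correct and is exactly the argument the paper intends (the paper merely says "conditioning on Claim~\ref{claim:entropy1} and using Lemma~\ref{lem:chain_rule} it can be verified"): view $p_{U|_{\bx^\ell}}$ as the $q$-mixture of the $p_{U|_i}$ over the partition $\{\cX_i\}$, apply the grouping lemma, and absorb the $H(q)\leq\log n = o(n)$ overhead into the $\e n/4$ lower bound from Claim~\ref{claim:entropy1}. Your remark on indexing by $(\mess,\ind)$ pairs to handle the multiset issue is a sensible bookkeeping point and does not change the route.
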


Consider sets $\cX_i$ with (relative) mass $q(i) \geq 1/n^2$.
It holds that
$$
\sum_{i \leq n-1;q(i)\geq 1/n^2} {q(i)H(p_{U|_{i}})} \geq \e n/16
$$
The above follows from the fact that
$
\sum_{i \leq n-1;q(i)\leq 1/n^2} {q(i)H(p_{U|_{i}})} + q(n)H(p_{U|_{i}}) \leq \sum_{i \leq n-1;q(i)\leq 1/n^2} {n/n^2} + 2^{-n+
3}n \leq 2$ (for sufficiently large $n$). Here we use the fact that $q(n) \leq |\cX_i| 2^{-3n+3}$.

We conclude the existence of a set $\cX_i$ such that $q(i) \geq 1/n^2$ and $H(p_{U|_{i}}) \geq \e n /16$.
We now further partition $\cX_i$.
For $j= 1,\dots,n-1$, let $\cX_{ij}$ be the set of codewords $\bx(\mess,\ind)$ in $\cX_i$ for which $p_{U|_i}(\mess)$ is in the range $(2^{-3j},2^{-3j+3}]$.
$\cX_{in}$ is defined to be the set of codewords $\bx(\mess,\ind)$ in $\cX_i$ for which $p_{U|_i}(\mess)$ is in the range $[0,2^{-3n+3}]$. 
Let $p(\cX_{ij})$ be the probability mass of $\cX_{ij}$.
Namely $p(\cX_{ij}) \simeq 2^{-3i}|\cX_{ij}|p(\bx^\ell)$.
Let $q'$ be the distribution over $\{1,2,\dots,n\}$ taking $j$ w.p. $p(\cX_{ij})/p(\cX_i)$.
Notice that $H(q') \leq \log(n) = o(n)$ (as its support is of size $n$).
As before, conditioning on Claim~\ref{claim:entropy2} and using Lemma~\ref{lem:chain_rule} it can be verified that (for the index $i$ specified above),

\begin{claim}
\label{clain:entropy2}
\begin{equation}
\label{eqn:cr2}
\sum_j{q'(j)H(p_{U|_{ij}})} \geq H(p_{U|_i}) - H(q') \geq \e \bl/32
\end{equation}
\end{claim}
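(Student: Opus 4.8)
The plan is to obtain this claim by exactly the argument that produced Claim~\ref{claim:entropy2} from Claim~\ref{claim:entropy1}, applied one level deeper in the recursive partition. Fix the index $i$ selected just above, so that $q(i)\geq 1/n^2$ and $H(p_{U|_{i}})\geq \e n/16$. I would invoke Lemma~\ref{lem:chain_rule} with index set $\{1,\dots,n\}$, taking $\mathbf{A}$ to be Alice's message $\bU$ conditioned on the transmitted codeword lying in $\cX_i$ (so $\mathbf{A}$ has distribution $p_{U|_{i}}$), taking $\mathbf{A_j}$ to be $\bU$ conditioned on the transmitted codeword lying in the sub-block $\cX_{ij}$ (distribution $p_{U|_{ij}}$), and taking the mixing distribution to be $q'$, where $q'(j)=p(\cX_{ij})/p(\cX_i)$. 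Since $\cX_{i1},\dots,\cX_{in}$ partition $\cX_i$ (as multisets), $q'$ is a genuine probability distribution and $\mathbf{A}$ equals $\mathbf{A_j}$ with probability $q'(j)$, so the hypotheses of Lemma~\ref{lem:chain_rule} are met.

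Lemma~\ref{lem:chain_rule} then gives
\[
H(p_{U|_{i}}) \leq \sum_j q'(j)\, H(p_{U|_{ij}}) + H(q'),
\]
and rearranging yields the first inequality of the claim. For the second inequality I would use that $q'$ has support of size $n$, so $H(q')\leq \log n = o(n)$, together with $H(p_{U|_{i}})\geq \e n/16$ from the choice of $i$; hence $H(p_{U|_{i}})-H(q')\geq \e n/16 - o(n)\geq \e n/32$ for all sufficiently large $\bl$, which is precisely what is claimed. All of this is to be read conditioned on the event of Claim~\ref{claim:entropy1} and on the prior selection of the good index $i$, so the bound $H(p_{U|_{i}})\geq \e n/16$ is indeed available as input.

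There is no genuinely difficult step: the whole argument is a verbatim reuse of the chain-rule bound of Lemma~\ref{lem:chain_rule}, one level further down the partition tree. The only points that need a little care are verifying that the sub-blocks $\cX_{ij}$ really do partition $\cX_i$ (immediate from their definition via the disjoint likelihood ranges $(2^{-3j},2^{-3j+3}]$ for $j<n$ and $[0,2^{-3n+3}]$ for $j=n$), and keeping track of the conditioning so that the entropy lower bound on $p_{U|_{i}}$ inherited from the previous step may legitimately be substituted. If desired, this template could be stated once and applied at each level of the partition, but spelling it out here is enough.
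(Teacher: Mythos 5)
Your proposal is correct and takes essentially the same route as the paper: the paper's (very terse) justification is precisely to apply Lemma~\ref{lem:chain_rule} to the partition of $\cX_i$ into the sets $\cX_{ij}$ with mixing distribution $q'$, and then to combine $H(q')\leq \log n$ with the bound $H(p_{U|_{i}})\geq \e \bl/16$ inherited from the choice of $i$, valid for sufficiently large $\bl$. Nothing further is needed.
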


Again, consider sets $\cX_{ij}$ with mass $q'(i) \geq 1/n^2$.
It holds that
$$
\sum_{j \leq n-1;q'(j)\geq 1/n^2} {q'(j)H(p_{U|_{ij}})} \geq \e n/64
$$

We conclude the existence of a set $\cX_{ij}$ such that 
\begin{itemize}
\item $H(p_{U|_{ij}}) \geq  \e n/64$.
\item $p(\cX_{ij}) \geq p(\bx^\ell)/n^4$.
\item For any $\bx(\mess,\ind) \in \cX_{ij}$ it holds that $p(\bx(\mess,\ind))$ is approximately $2^{-3i}p(\bx^\ell)$.
\item For any $\bx(\mess,\ind) \in \cX_{ij}$ it holds that $p_{U|_{ij}}(\mess)$ is approximately equal.
\end{itemize}

The set $\cX_{ij}$ is exactly what we are looking for.
Roughly speaking, by Claim~\ref{claim:entropy1}, with probability at least $\e/4$ Calvin views a prefix $\bx^\ell$ for which $H(p_{U|_{\bx^\ell}}) \geq \e n/4$.
Conditioning on this event, both Alice and Calvin choose codewords $\bx(\mess,\ind)$, $\bx'(\mess',\ind')$ in $\cX_{ij}$ with probability at least $1/n^8$.

We now sketch to remainder of the proof which closely follows that of Theorem~\ref{the:det}.
We partition $\cX_{ij}$ into {\em groups} of messages $\cX_{ij}(\mess)$ consisting of all codewords in $\cX_{ij}$ corresponding to $\mess$. 
Recall that each codeword $\bx(\mess,\ind) \in \cX_{ij}$ has approximately the same probability $p(\bx(\mess,\ind))$, and for each $\bx(\mess,\ind) \in \cX_{ij}$ it holds that $p_{U|_{ij}}(\mess)$ is approximately the same value.
This implies that each group $\cX_{ij}(\mess) \subseteq \cX_{ij}$ has approximately the same size.
Moreover, as $H(p_{U|_{ij}}) \geq  \e n/64$ it holds that there are at least $2^{\e n/64}$ non-empty subsets $\cX_{ij}(\mess)$ in  $\cX_{ij}$.

So, all in all, $\cX_{ij}$ has a very symmetric structure: it includes {\em many} groups, each consisting of elements with the same transmission probability, and each of approximately the same size and mass (w.r.t. $p$).
This reduces us to the case considered in Theorem~\ref{the:det} in which our subset $\cX|_{\bx^\ell}$ included many messages, each with the same probability, details follow.

Consider the graph $\Graph=(\cV,\cE)$ in which the vertex set $\cV$ consists of the set $\cX_{ij}$ and  two nodes are connected by an edge if their Hamming distance is less than $d=2pn- \e n/8$.

Now, it is can be verified (using analysis almost identical to that given in the proof of Theorem~\ref{the:det})
that
\begin{enumerate}
\item With probability at least $1-2^{-\Omega(\e \bl)}$ the codewords $\bx(\mess,\ind)$ and $\bx'(\mess',\ind')$ satisfy $\mess \ne \mess'$. Here one needs to take into consideration the slight difference in the group sizes and the probabilities for each codeword.
\item With probability $\Omega \left (\frac{\e}{p} \right )$ the vertices in $\Graph$ corresponding to $\bx(\mess,\ind)$ and $\bx'(\mess',\ind')$ are connected by an edge.
\item During Calvin's random bit-flip process, with high probability of $1-2^{-\Omega(\e^2 \bl)}$, Calvin does not ``run out'' of his budget of $p\bl$ bit flips.
\item Conditioning on the above, Bob cannot distinguish between the case in which $\bx(\mess,\ind)$ or $\bx'(\mess',\ind')$ were transmitted.
\item Finally, on the pair of events in which Alice transmits $\bx(\mess,\ind)$ and Calvin chooses $\bx'(\mess',\ind')$, and Alice transmits $\bx'(\mess',\ind')$ and Calvin chooses $\bx(\mess,\ind)$, no matter which decoding process Bob uses, he has an average decoding error that is bounded away from zero. Here again we take into account the slight differences between $p(\bx(\mess,\ind))$ and $p(\bx'(\mess',\ind'))$.
\end{enumerate}

To summarize, Calvin causes a decoding error with probability $\Omega(\poly(\e)/\poly(n))=\Omega(1/\poly(n))$ as desired. This concludes our proof.
\hfill $\blacksquare$


\section{Conclusions}
We analyze the capacity of the causal-adversarial channel and show (for both deterministic and probabilistic encoders) that the capacity is bounded by above by $\min\{1-H(p),(1-4p)^+\}$. For a large range of $p$ (for all $p > 0.25$), the maximum achievable rate equals that of the {\em stronger} classical ``omniscient" adversarial model ({\em i.e.}, $0$).

Several questions remain open. 
In this work we do not address achievability results (i.e., the construction of codes).
It would be very interesting to obtain codes for the causal-adversary channel which obtain rate greater than that know for the ``omniscient" adversarial model ({\em i.e.}, the Gilbert-Varshamov bound) for $p < 0.25$).
As we do not believe that the upper bound of $(1-4p)^+$ presented in this work is actually tight, such codes, if they exist, may give a hint to the correct capacity.

As done in our work on large alphabets \cite{djl}, one may also consider the more general channel model in which for a {\em delay} parameter $d \in (0,1)$, the jammer's decision on the corruption of $x_i$ must depend solely on $x_j$ for $j \leq i - d\bl$. This might correspond to the scenario in which the error transmission of the adversarial jammer is
delayed due to certain computational tasks that the adversary needs to perform. 
The capacity of the causal channel with delay is an intriguing problem left open in this work.

\suppress{
\appendix

\section{List of parameters of our codes}

\begin{table}[h]
\begin{tabular}{|c|c|c|c|c|c|}
\hline
 \multicolumn{2}{|c|}{} & Capacity & Minimum $q$ & Complexity & Probability of Error \\ \hline
\multicolumn{2}{|l|}{Theorem 1} & $1-2p$ & $q>\bl$ & ${\cal O}\left ( \bl^2\log \bl \log^3 q\right )$ & $0$ \\ \hline
\multicolumn{2}{|l|}{Theorem 2}  & $1-p$ &   $\bl^{\Omega(1/\delta^2)}$ & ${\cal O}\left ( \bl^2\log \bl \log ^3q\right )$ & ${\cal O}\left (  \bl q^{-\delta^2}\right )$ \\ \hline
Theorem 3 & $d<p<0.5$ & $1-2p+d$ &   $\bl^{\Omega(\bl^2/\delta^2)}$
& ${\cal O}\left ( \bl^{p/d+2}\log \bl\log ^3q \right )$ & ${\cal O}\left (  n^2q^{-\delta^2/\bl^2}\right )$ \\ \cline{2-6}
& $p<d, p<0.5$ & $1-p$ &$\bl^{\Omega(\bl^2/\delta^2)}$ & ${\cal O}\left ( \bl^2\log\bl \log^3 q  \right )$ &
${\cal O}\left (  \bl^2q^{-\delta^2/\bl^2}\right )$ \\\hline
\end{tabular}
\caption{Bounds on the capacity $\Ca$, alphabet size $q$ required to achieve capacity, computational complexity, and probability of error, of our main results. The bounds are in terms of the parameters $p$ (adversary's power), $d$ (adversary's delay), $\bl$ (block-length), $q$ (field-size), and $\delta$ (difference between the $\Ca$ and rate $\rate$).}
\label{table:1}
\end{table}

Table~\ref{table:1} is obtained by careful analysis of the parameters of the algorithms corresponding to Theorems~\ref{the:nodelay}, \ref{the:add} and \ref{the:over}. The corresponding values for the scenarios in Theorem~\ref{the:jl} are omitted since they are element-wise identical to those in the table. The values in Table~\ref{table:1} substitute the rate-overhead parameter $\delta$ for the packet-size parameter $m$ used in the proofs of Theorems~\ref{the:add} and \ref{the:over} since we feel this choice of variables  is more ``natural" when examining the tradeoffs between code parameters. Also, the algorithms presented in the proofs of Theorems~\ref{the:add} and \ref{the:over} correspond to a particular setting of the $\delta$ parameter; we omitted this degree of freedom in the presentation of the proofs, for ease of exposition. Lastly, no effort has been made to optimize the tradeoffs between the parameters in Table~\ref{table:1}; in fact, we have preliminary results on schemes that improve on some of these parameters (work in progress).
}

\suppress{
\mikel{Added appropriate averaging argument}

Let us consider any code of length $\bl$. Let ${p} \bydef
\lfloor (n. \max\{p, 0.5\}) \rfloor /n$.
We now show a ``wait-and-attack'' strategy for Calvin to prove that the rate in (\ref{eq:la_0_del}) is an upper bound on $\Ca_0^{\ad}(p)$ and $\Ca_0^{\om}(p)$.
Suppose $\rate = \e+(1-2p)^+$ for some constant $\e >0$. Thus,
there are at least $q^{\bl(\e+(1-2p)^+)}$ possible messages that Alice
might wish to transmit to Bob.
Calvin does not corrupt the first $\bl(1- 2{p})$ bits Alice transmits.
We conclude, that with probability at least
$1-q^{-\e \bl /2}$ over the message set, after Alice's first $\bl(1- 2{p})$
transmitted bits, the set $\conf$ of messages
consistent with what Bob and Calvin have observed thus far is of size at least $q^{\e\bl/2}$.
Calvin picks a random element $\bx'$ from $\conf$. With probability at least
$(1-q^{-\e \bl/2})^2$, $\bx' $
is not the same as Alice's transmitted message $\bx$. Then Calvin chooses, with equal probability, one of the following two strategies.
\begin{itemize}
\item Calvin overwrites the $\bl {p}$ bits $(x_{1+\bl(1- 2{p})},\dots,x_{\bl(1- {p})})$  of $\bx$ with the corresponding bits
$(x'_{1+\bl(1- 2{p})},\dots,x'_{\bl(1- {p})})$ of $\bx'$.
\item Calvin overwrites the $\bl p$ bits
$(x_{1+\bl(1- {p})},\dots,x_{\bl })$ of $\bx$ with the corresponding bits
$(x'_{1 +\bl(1- {p})},\dots,x'_{\bl })$.
\end{itemize}
If $\bx$ is indeed distinct from $\bx'$, Bob has no way of determining whether Calvin overwrote
$(x_{1+\bl(1- 2{p})},\dots,x_{\bl(1- {p})})$ or
$(x_{1+\bl(1- {p})},\dots,x_{\bl })$, and
therefore cannot determine whether Alice transmitted $\bx$ or $\bx'$. Thus
Bob's probability (over the message set and over the choice of Calvin) of
decoding incorrectly is at least $\frac{1}{2}(1-q^{-\e \bl/2})^2 \geq
\frac{1}{4}$ for large enough $q$ and/or $n$. ($\bx$ differs from $\bx'$ with probability at least $(1-q^{-\e
\bl/2})^2$, and if they do indeed differ, Bob decodes incorrectly with
probability at least $1/2$). \hfill $\Box$

\mikel{We may want to make the above even more formal}
\bikd{made necessary changes to make the subscripts integral}
}



\end{document}